\title{Monotone Drawings of $k$-Inner Planar Graphs}  %(Versio 09 Archive)(21/8/2018)}
\author{Anargyros Oikonomou\inst{1}, Antonios~Symvonis\inst{2}}
\authorrunning{A.~Oikonomou\inst{1},A.~Symvonis\inst{2}}
\authorrunning{A.~Oikonomou\inst{1},A.~Symvonis\inst{2}}
\institute{
School of Electrical \& Computer Engineering \and
School of Applied Mathematical \& Physical Sciences\\ National Technical University of Athens, Greece
}
\begin{document}

\pagestyle{plain}
\maketitle

% ============================================================
\begin{abstract}
A \emph{$k$-inner planar graph} is a planar graph that has a plane drawing with at most $k$ \emph{internal vertices}, i.e., vertices that  do not lie on the boundary of the outer face of its drawing. An outerplanar graph is a $0$-inner planar graph.
In this paper, we show how to construct a monotone drawing of a  $k$-inner planar graph on a $2(k+1)n \times 2(k+1)n$ grid.
In the special case of an outerplanar graph,
we can produce a planar monotone drawing on a $n \times n$ grid,
improving  the results in~\cite{m:2,HR:2015}.
% This yields  monotone drawings for outerplanar graphs on a $2n \times 2n$  grid, improving  the results in~\cite{m:2,HR:2015}.
\end{abstract}
% ============================================================

\section{Introduction}
A \emph{straight-line drawing} $\Gamma$ of a graph G is a mapping of each vertex to a distinct point on the plane and of each edge to a straight-line segment between the vertices.
A path $P=\{p_0,p_1,\ldots,p_n\}$ is \emph{monotone} if there exists a line $l$ such that the projections of the vertices of $P$ on $l$ appear on $l$ in the same order as on $P$.
A straight-line drawing $\Gamma$ of a graph G is \emph{monotone}, if a \emph{monotone} path connects every pair of vertices.
We say that an angle $\theta$ is \emph{convex}
if $0 < \theta \leq \pi$. A tree $T$ is called \emph{ ordered} if the order of edges incident to any vertex is fixed.
We call a drawing $\Gamma$ of an  ordered tree $T$ rooted at $r$ \emph{near-convex monotone}
if it is monotone and any pair of consecutive edges incident to a vertex,
with the exception of a single pair of consecutive edges incident to  $r$,
form a convex angle.

\emph{Monotone graph drawing} has been lately a very active   research area and several interesting results have appeared
since its introduction by {Angelini et al.}~\cite{m:1}.
In the case of  trees, 
{Angelini et al.}~\cite{m:1} provided two algorithms that used ideas from number theory (Stern-Brocot trees~\cite{Brocot1860,Stern1858}~\cite[Sect. 4.5]{Graham:1994}) to produce  monotone tree drawings.
Their BFS-based algorithm  used an $O(n^{1.6}) \times O(n^{1.6})$ grid while their DFS-based algorithm used an $O(n) \times O(n^2)$ grid.
Later, {Kindermann et al.}~\cite{mt:1} provided an algorithm based on Farey sequence (see~\cite[Sect. 4.5]{Graham:1994}) that used an $O(n^{1.5}) \times O(n^{1.5})$ grid.
He and He in a series of papers~\cite{mt:3,HE:SIDMA,mt:2}  gave  algorithms, also based on Farey sequences, that eventually reduced the required grid size to  $O(n) \times O(n)$.
Their monotone tree drawing uses a $12n \times 12n$ grid which is asymptotically optimal as there exist trees which require at least ${n\over 9} \times {n \over 9}$ area~\cite{HE:SIDMA}.
In a recent paper,
Oikonomou and Symvonis~\cite{GDOS:1} followed a different approach from number theory
and gave an algorithm based on a simple  weighting  method  and  some  simple  facts from  geometry,
that draws a tree on an $n \times n$ grid.

Hossain and Rahman~\cite{HR:2015} showed that  by  modifying the embedding of a connected planar graph
we can  produce a planar monotone drawing of that graph on an $O(n) \times O(n^2)$ grid. To achieve that, they reinvented  the  notion of an \emph{orderly spanning tree} introduced by Chiang et al.~\cite{CLL:2005} and referred to it as  \emph{ good spanning tree}. 
Finally, He and He~\cite{HeHe2015:monot3ConPlanar} proved that Felsner's
algorithm~\cite{Felsner2001} builds convex monotone drawings of 3-connected planar graphs
on $O(n) \times O(n)$  grids in linear time.

%\begin{comment}
A \emph{$k$-inner planar graph} is a planar graph that has a plane drawing with at most $k$ \emph{inner vertices}, i.e., vertices that  do not lie on the boundary of the outer face of its drawing. An outerplanar graph is a $0$-inner planar graph. In this paper, we show how to construct a monotone drawing of a  $k$-inner planar graph on a $2(k+1)n \times 2(k+1)n$ grid. This yields  monotone drawings for outerplanar graphs on a $2n \times 2n$  grid, improving  the results in~\cite{m:2,HR:2015}. 
%\end{comment}
%Due to space limitations, we do not address the time complexity of our algorithm (it runs in linear time).
Proofs of lemmata/theorems can be found in the Appendix.

\section{Preliminaries}

\begin{comment}
Let $\Gamma$ be a drawing of a graph $G$ and $(u,v)$ be an edge from vertex $u$ to vertex $v$ in $G$.
The slope of edge $(u,v)$, denoted by $slope(u,v)$, is the angle spanned by a counter-clockwise rotation that brings a horizontal half-line starting at $u$ and directed towards increasing $x$-coordinates to coincide with the half-line starting at $u$ and passing through $v$.
%We consider slopes that are equivalent modulo $2\pi$ as the same slope.
%Observe that $slope(u,v)=slope(v,u) - \pi$.
\end{comment}

Let $T$ be a tree rooted at a vertex $r$. Denote by $T_u$ the sub-tree of $T$ rooted at vertex $u$. By $|T_u|$ we denote the number of vertices of $T_u$. In the rest of the paper, we assume that all tree edges are directed away from the root.

%NEW ADDITION=====================================================================================================
%Let $G_\phi$ be a planar embedding of a planar graph $G$ with $n$ total vertices.
%The embedding $G_\phi$ is called $k$-inner if there are at most $k$ vertices that do not belong to the outer face of $G_\phi$.
%=================================================================================================================

Our algorithm for obtaining a monotone plane drawing of a $k$-inner planar graph in based  on our ability: i) to produce for any rooted tree a compact  monotone drawing  that satisfies specific properties and ii) to identify for any planar graph a \emph{good spanning tree}.    

%\subsection{One Quadrant Monotone Tree Drawings}
%\input{algo.tex}

\begin{theorem}[Oikonomou, Symvonis~\cite{GDOS:1}]
\label{thm:GD17_oneQuadTree}
Given an  $n$-vertex tree $T$ rooted at vertex $r$, we can produce in $O(n)$ time a monotone  drawing  of $T$ where: 
i) the root $r$ is drawn at $(0,0)$, 
ii) the drawing in non-strictly slope-disjoint\footnote{See~\cite{GDOS:1} for the definition of \emph{non-strictly slope-disjoint} drawings of trees.}, 
iii) the drawing is contained in the first quadrant, and 
iv) it fits in an $n \times n$ grid. 
\end{theorem}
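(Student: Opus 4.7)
My plan is to construct the drawing by recursively assigning to each subtree an angular cone of directions, then choosing an integer (primitive lattice) vector inside each cone for the corresponding edge. The two constraints that drive everything are monotonicity (enforced via slope-disjointness of sibling cones) and the $n \times n$ bound (enforced by keeping the lattice vectors short in terms of subtree size).

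First, I would perform a top-down pass assigning cones. Start with the cone $(0, \pi/2)$ for the root. At a vertex $v$ with allocated cone $C_v$ and children $c_1, \ldots, c_d$ (in their embedding order), I would split $C_v$ into consecutive sub-cones $C_{c_1}, \ldots, C_{c_d}$ whose angular widths are proportional to $|T_{c_i}|/|T_v|$, and within each $C_{c_i}$ reserve a narrower sub-cone $C'_{c_i}$ for the recursive call on $T_{c_i}$, leaving a thin ``buffer slot'' in which to place the slope of the edge $(v, c_i)$ itself. Since sibling cones $C_{c_i}$ are pairwise disjoint (up to shared endpoints, hence \emph{non-strictly} slope-disjoint), and every edge in $T_{c_i}$ receives a slope inside $C_{c_i}$, the drawing will be slope-disjoint by construction.

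Second, I would establish monotonicity. For any pair $a, b$ with lowest common ancestor $w$, the two branches $w \to a$ and $w \to b$ of the connecting path lie in disjoint cones $C_{c_i}$ and $C_{c_j}$ of children of $w$; picking a line $\ell$ whose normal lies strictly between these two cones makes the projections of the path onto $\ell$ strictly monotone because each edge's projection onto $\ell$ has a consistent sign, and slope-disjointness keeps the projections from ``cancelling'' along a branch. The case when $a$ or $b$ equals $w$ is handled by projecting onto a line transverse to $C_{c_i}$ alone.

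Third, I would bound the grid size by induction on $|T_v|$, aiming to show that the drawing of $T_v$ fits in an $|T_v| \times |T_v|$ box. In the inductive step, the subtree boxes of the children $c_1, \ldots, c_d$ must be placed at positions $v + \ell_i \cdot d_i$, where $d_i$ is the chosen primitive lattice vector inside the slot reserved for edge $(v, c_i)$ and $\ell_i$ is its length. The key identity I would want is that the total horizontal extent $\sum_i \ell_i (d_i)_x + (\text{child-box widths})$ is at most $|T_v|$, and similarly for vertical extent; this should follow if I select $\ell_i$ and $d_i$ so that the ``horizontal cost'' charged to subtree $T_{c_i}$ is at most $|T_{c_i}|$, which in turn reduces to picking a short enough lattice direction inside a cone of angular width $\Omega(|T_{c_i}|/|T_v|)$.

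The hard part, I expect, is exactly this last ingredient: showing that every cone $C_{c_i}$ contains a primitive integer direction $(p, q)$ with $p + q$ controlled by the cone's width, so that subsequent scaling keeps each subtree's contribution to the bounding box linear in its size. This is the content of the Stern--Brocot / Farey / weighted-mediant arguments that have powered previous monotone tree results, and presumably the paper~\cite{GDOS:1} handles it via its ``simple weighting method'' rather than via a number-theoretic density statement. Once that lemma is in hand, the grid bound and the placement in the first quadrant (guaranteed by the initial cone $(0, \pi/2)$ being a subset of the open first quadrant of directions) follow, and the $O(n)$ runtime comes from the fact that each edge is processed in constant amortised time during a single tree traversal.
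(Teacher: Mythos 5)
This theorem is not proved in the paper at all: it is imported verbatim from~\cite{GDOS:1}, and the paper only builds on that algorithm (referred to as \textsf{MQuadTD}) when proving its own Theorem~2. So the relevant question is whether your reconstruction would actually establish the cited result, and there it has a genuine gap. You explicitly defer the one ingredient that constitutes the substance of the theorem, namely that integer edge directions can be chosen inside the allotted cones so that the \emph{entire} drawing fits in an $n \times n$ grid. Everything else in your outline (slope-disjoint cones assigned top-down, slope-disjointness implying monotonicity, first-quadrant placement from the initial cone) is the standard framework shared by Angelini et al., Kindermann et al., and He--He; the $n \times n$ bound is precisely what distinguishes~\cite{GDOS:1} from those predecessors, and your sketch assumes it rather than proves it.

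Moreover, the specific route you propose --- cone widths proportional to $|T_{c_i}|/|T_v|$, then a shortest primitive lattice direction inside each cone --- is essentially the Stern--Brocot/Farey approach, which in its straightforward form does not give $n \times n$: a cone of width $\Theta(|T_{c_i}|/|T_v|)$ only guarantees a primitive vector with coordinates as large as $\Theta(|T_v|/|T_{c_i}|)$, and after scaling the edge enough to clear the child's bounding box the ``horizontal cost'' charged to $T_{c_i}$ can be of order $|T_v|$ rather than $|T_{c_i}|$; summed over children this is how the earlier algorithms end up with $O(n^{1.5}) \times O(n^{1.5})$ or $O(n) \times O(n^2)$ grids. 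The claimed key identity (each subtree is charged at most its own size in each coordinate) is exactly the hard part, and the cited work obtains it by a different mechanism (a weighting scheme with a mediant-style choice of edge vectors and an amortization of edge lengths over the whole tree), not by a density statement about primitive directions in narrow cones. A secondary issue: finding a suitable primitive vector per cone is not obviously constant amortized time, so the $O(n)$ running-time claim also needs an argument. As it stands, your proposal is a plausible plan whose central lemma is both unproven and, in the form stated, unlikely to hold.
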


By utilizing and slightly modifying the algorithm that supports  Thm.~\ref{thm:GD17_oneQuadTree}, we can obtain a specific monotone tree drawing that we later use in our algorithm for monotone $k$-inner planar graphs.

\begin{theorem}
\label{thm:convexSecondOctTree}
Given an  $n$-vertex tree $T$ rooted at vertex $r$, we can produce in $O(n)$ time a monotone  drawing  of $T$ where: i) the root $r$ is drawn at $(0,0)$, ii) the drawing in non-strictly slope-disjoint, iii) the drawing is near-convex, iv)
the drawing is contained in the second octant (defined by two half-lines with a common origin and slope $\pi \over 4$ and $\pi \over 2$, resp.),   and v) it fits in a $2n \times 2n$ grid. 
%\marginpar{For a proof, see Appendix.}
\end{theorem}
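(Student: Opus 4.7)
The plan is to build on Theorem~\ref{thm:GD17_oneQuadTree} and its underlying algorithm. Concretely, I first apply the algorithm of Theorem~\ref{thm:GD17_oneQuadTree} to produce a monotone, non-strictly slope-disjoint drawing $\Gamma$ of $T$ in $[0,n]\times[0,n]$ with the root at the origin. Then I post-process $\Gamma$ by applying the unimodular shear $\phi(x,y) = (x,\, x+y)$ to every vertex, obtaining a drawing $\Gamma'$.

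The routine checks are as follows. Since $\phi(0,0)=(0,0)$, the root stays at the origin, giving~(i). For any $(x,y)$ with $x,y \geq 0$ the image satisfies $0 \leq x' = x$ and $x' \leq y' = x+y$, so $\phi$ sends the first quadrant bijectively onto the second octant, yielding~(iv). The map has integer entries and determinant $1$, so integer grid points go to integer grid points, and from $x' \leq n$, $y' \leq 2n$ we conclude that $\Gamma'$ fits in a $2n \times 2n$ grid, giving~(v). Being an invertible affine map, $\phi$ preserves monotonicity of paths between any two vertices. For property~(ii), the function $\theta \mapsto \arctan(1+\tan\theta)$ is a strictly increasing bijection $(0,\pi/2) \to (\pi/4,\pi/2)$, so every slope-disjoint cone structure on $\Gamma$ transfers directly to a slope-disjoint cone structure on $\Gamma'$ with no loss of pairwise disjointness or nesting.

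For property~(iii), the non-strictly slope-disjoint structure from Theorem~\ref{thm:GD17_oneQuadTree} implies that every edge of $\Gamma$ has slope strictly in $(0,\pi/2)$, so every edge of $\Gamma'$ has slope strictly in $(\pi/4,\pi/2)$. At a non-root vertex $v$, the direction from $v$ to its parent then lies in $(5\pi/4, 3\pi/2)$, while the directions from $v$ to each of its children lie in $(\pi/4, \pi/2)$. Hence the children occupy an angular window of width at most $\pi/4$ lying opposite the parent direction, and the planar cyclic order at $v$ places the children as a single contiguous block. The inter-child wedges are each at most $\pi/4$ and therefore convex; I would then invoke the finer subcone assigned to $v$ by the algorithm of Theorem~\ref{thm:GD17_oneQuadTree} to show that the two wedges adjacent to the parent edge are also each at most $\pi$, so that any possibly reflex wedge is confined to a single pair of consecutive edges at the root.

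The main obstacle is precisely the last step. The crude bound that edge slopes lie in $(\pi/4, \pi/2)$ only forces the two wedges adjacent to the parent edge at a non-root vertex to lie in $(3\pi/4, 5\pi/4)$, so \emph{a priori} either of them could be reflex. Ruling this out requires exploiting the exact cone refinement produced by the algorithm of Theorem~\ref{thm:GD17_oneQuadTree}, which places $v$ and all of its descendants in a strict subcone of the cone assigned to $v$'s parent, aligning the parent direction with the children's angular block so that both boundary wedges are convex. I expect that this is also why the statement asserts only a ``slight modification'' of the underlying algorithm rather than a pure post-processing step: some care in how the cone refinement is produced may be necessary to make the near-convexity argument go through cleanly.
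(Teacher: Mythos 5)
Your proposal correctly handles (i), (ii), (iv) and (v): the shear $\phi(x,y)=(x,x+y)$ applied to the $n\times n$ first-quadrant drawing of Theorem~\ref{thm:GD17_oneQuadTree} preserves monotonicity, sends the first quadrant to the second octant, keeps grid points on grid points within a $2n\times 2n$ (in fact $n\times 2n$) box, and transfers the non-strictly slope-disjoint angle ranges via a monotone slope re-mapping. This is a genuinely different (and arguably cleaner) route for the octant and grid-size conditions than the paper takes: the paper instead builds a tree $T'$ with a new root and two copies of $T$ as subtrees, draws $T'$ on a $(2n+1)\times(2n+1)$ grid with the first-quadrant algorithm, and extracts the copy of $T$ lying in the second octant by symmetry.

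However, there is a genuine gap exactly where you flag it: property (iii). The drawing guaranteed by Theorem~\ref{thm:GD17_oneQuadTree} is \emph{not} near-convex in general, and no orientation-preserving linear post-processing can repair that, since such a map preserves which consecutive-edge angles are reflex. Your hope that the algorithm's cone refinement already forces the two wedges adjacent to the parent edge to be convex is unfounded; the paper explicitly identifies the failure mode (a reflex angle can occur at an internal vertex $u$ between the incoming edge $(p^u,u)$ and the edge to $u$'s first or last child, precisely when the slope of $(p^u,u)$ falls inside that child's angle range) and closes it by \emph{modifying the algorithm}: the offending child is placed at the first grid point on the extension of $(p^u,u)$, which makes that angle exactly $\pi$ (hence convex by the paper's definition) and does not increase the grid size because the new edge has the same length as $(p^u,u)$. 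Without such an algorithmic modification, your argument establishes everything except the near-convexity that is the theorem's key new ingredient (and which Lemmas~\ref{lem:convexLeaderRegion} and~\ref{lem:nonTreeOther} later depend on), so the proposal as written is incomplete.
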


The following definition of a \emph{good spanning tree} is due to Hossain and Rahman~\cite{HR:2015}.
Let $G$ be a connected embedded plane graph and let $r$ be a vertex of $G$ lying on the boundary of its outer face. Let $T$ be an ordered spanning tree of $G$ rooted in $r$ that respects the embedding of $G$.
Let $P(r, v) = \langle u_1(= r), u_2,..., u_k(= v) \rangle$ be the unique path in $T$ from the root $r$ to a vertex $v \neq r$.
The path $P(r, v)$ divides the children of $u_i$ $(1 \leq i < k)$, except $u_{i+1}$,
into two groups; the left group $L$ and the right group $R$.
A child $x$ of $u_i$ is in  group $L$ and, more specifically in a subgroup of $L$ denoted by $u_i^L$, if the edge $(u_i, x)$ appears before  edge $(u_i, u_{i+1})$ in
clockwise order of the edges incident to $u_i$ when the ordering starts from  edge $(u_i, u_{i-1})$.
Similarly, a child $x$ of $u_i$ is in  group $R$ and more specifically in a subgroup of $R$ denoted by $u_i^R$ if the edge $(u_i, x)$ appears after  edge $(u_i, u_{i+1})$ in clockwise order of the edges incident to $u_i$ when the ordering starts from  edge $(u_i, u_{i-1})$.
A tree $T$ is called a good spanning tree of $G$ if every vertex $v$ $(v \neq r)$ of $G$ satisfies the following two conditions with respect to $P(r, v)$ (see Figure~\ref{fig:goodTree} for an example  of a  good spanning tree (solid edges)):

\begin{enumerate}
\item[C1:] $G$ does not have a non-tree edge $(v,u_i), i < k$; and

\item[C2:] The edges of $G$ incident to  vertex $v$, excluding $(u_{k-1},v)$, can be partitioned into three disjoint (and possibly empty) sets $X_v,~Y_v$ and $Z_v$ satisfying the following conditions:
	\begin{enumerate}
	\item[a] Each of $X_v$ and $Z_v$ is a set of consecutive non-tree edges and $Y_v$ is a set of consecutive tree edges.
	\item[b] Edges of  $X_v, Y_v$ and $Z_v$ appear clockwise in this order after edge $(u_{k-1},v)$.
	\item[c] For each edge $(v,v')\in X_v$, $v' \in T_w$ for some $w \in {u_{i}^{L}}$, $i<k$, and for each edge $(v,v')\in Z_{v}$, $v' \in T_w$ for some $w \in {u_{i}^{R}}$, $i<k$.
	\end{enumerate}
\end{enumerate}

%An example of a  a good spanning tree (solid edges) is shown in Figure~\ref{fig:goodTree}.

\begin{figure}[t]
	\begin{minipage}[t]{0.40\textwidth}
		\centering
		\includegraphics[scale=0.4]{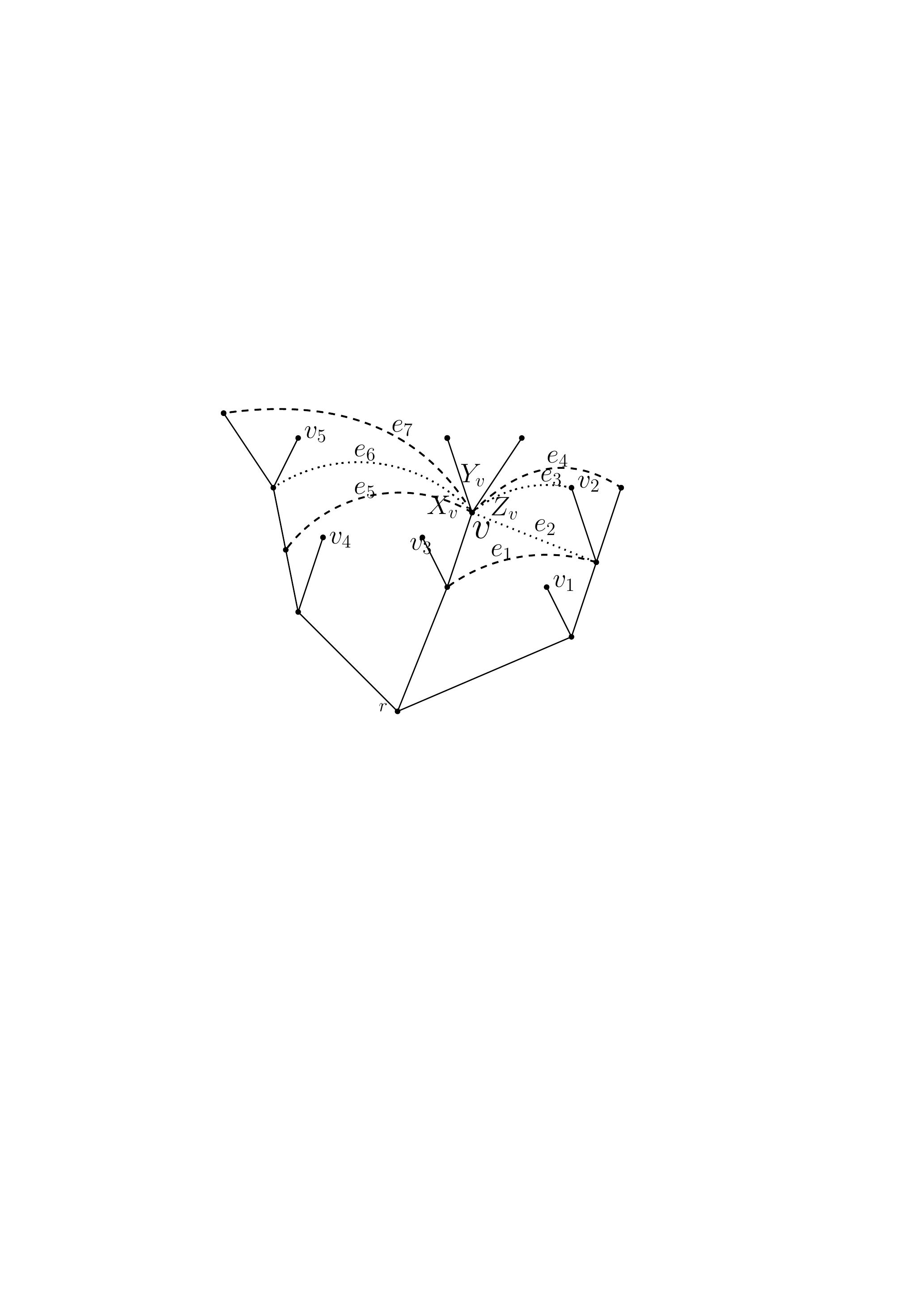}
		\caption{A good spanning tree.}
		\label{fig:goodTree}
	\end{minipage}
	\hfill
	\centering
	\begin{minipage}[t]{0.55\textwidth}
		\centering
		\includegraphics[scale=0.55]{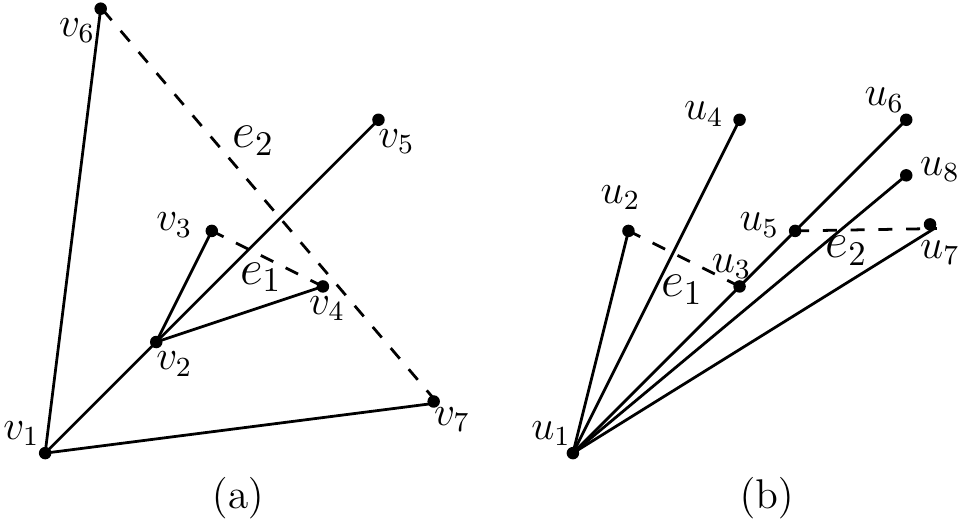}
		\caption{Dependencies between leader edges.}
		\label{fig:dep1_2}
	\end{minipage}
\end{figure}

%%\begin{figure}[t]
%	\centering
%		\includegraphics[scale=0.7]{figures/goodTree.pdf}
%		\caption{A plane graph and a good spanning tree (solid edges).}
%		\label{fig:goodTree}
%\end{figure}

\begin{theorem}[\cite{CLL:2005,HR:2015}]\label{theorem:good_trees}
Let $G$ be a connected planar graph of $n$ vertices.
Then $G$ has a planar embedding $G_\phi$ that contains a good spanning tree.
Furthermore, $G_\phi$ and a good spanning tree $T$ of $G_\phi$ can be found in $O(n)$ time.
\end{theorem}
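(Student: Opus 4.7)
The plan is to reduce to the triconnected case via planarity-preserving augmentation, obtain a canonical ordering of the augmented graph in the style of Kant or de~Fraysseix--Pach--Pollack, and then read off $T$ from that ordering. I would first fix an arbitrary planar embedding of $G$ and a vertex $r$ lying on its outer face. If $G$ is not already internally triangulated and triconnected, I would augment it by adding dummy edges to obtain a plane graph $\hat G$ that (i) preserves the outer face and the position of $r$ on it, (ii) is internally triangulated, and (iii) is triconnected. Standard augmentation procedures produce such a $\hat G$ in $O(n)$ time.

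Next I would compute a Kant-style canonical ordering $v_1,\ldots,v_n = r$ of $\hat G$. Recall that for every $k<n$, the subgraph $\hat G_k$ induced by $\{v_1,\ldots,v_k\}$ is 2-connected, the boundary of its outer face contains a distinguished path, and vertex $v_{k+1}$ is attached to a consecutive subarc of that boundary. I would then build $T$ by letting every $v_k$ with $k<n$ be the child of the clockwise-most neighbor on the arc through which it was inserted; directing tree edges away from $r$ and inheriting the cyclic order of edges at each vertex from $\hat G$ turns $T$ into a rooted ordered spanning tree.

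To check that $T$ is a good spanning tree, I would verify C1 and C2. Condition C1 holds because any non-tree edge from $v_k$ to a strict ancestor $u_i$ with $i<k-1$ on the path $P(r,v_k)$ would force a chord crossing the insertion arc of $v_k$, contradicting the canonical-ordering structure of $\hat G$. Condition C2 follows by a local analysis at each $v_k$: when $v_k$ is inserted, its edges to the current outer arc are consecutive in the cyclic order around $v_k$; the edge to its parent lies at one end of this block, and the children of $v_k$ (which attach later, each on an arc nested within the portion above $v_k$) lie consecutively between these outer-arc edges. Orienting so that the parent edge $(u_{k-1},v_k)$ comes first in clockwise order, the remaining edges split into the desired $X_{v_k}, Y_{v_k}, Z_{v_k}$ with exactly the demanded correspondence to left- and right-groups of ancestors. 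Finally, deleting the dummy edges of $\hat G$ only removes elements of $X_{v_k}\cup Z_{v_k}$, leaving the two conditions intact.

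The hard part is guaranteeing that the augmentation can be performed so that no dummy edge is chosen as a tree edge by the canonical ordering, as otherwise removing it would disconnect $T$; Chiang, Lin, and Lu resolve this by inserting dummies only as chords of existing faces and picking the ordering so that on each newly-added arc some original edge is available to serve as the tree edge, while Hossain and Rahman organize the biconnected components of $G$ so that the joining edges remain non-tree. All four steps --- triconnected augmentation, canonical ordering, tree extraction, and dummy removal --- admit linear-time implementations, giving the claimed $O(n)$ bound.
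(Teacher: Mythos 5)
This theorem is not actually proved in the paper: it is imported as a black box from Chiang--Lin--Lu and Hossain--Rahman, and the only point where the paper touches its proof is the argument for Lemma~1, which recalls that the Hossain--Rahman algorithm constructs $T$ while \emph{changing the input embedding} by moving $u$-components and $(u,v)$-split components out of cycles of the partially built tree. So there is no in-paper proof to match your proposal against; it has to stand on its own, and it has a genuine gap.

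The gap is precisely the step you flag as ``the hard part'' and then delegate back to the cited papers: guaranteeing that no dummy edge of the augmented graph $\hat G$ ends up as a tree edge (equivalently, that deleting the dummies removes only elements of $X_v \cup Z_v$). That is not a technicality absorbed by standard canonical-ordering machinery --- it is the entire content of the theorem, and invoking CLL/HR for it makes the argument circular with respect to the statement being proved. Moreover, it cannot be repaired inside the framework you set up, because you fix an arbitrary initial embedding of $G$ and insist the augmentation preserve its outer face and the position of $r$. A plane graph with a fixed embedding need not admit a good spanning tree at all; this is exactly why the theorem only asserts the existence of \emph{some} embedding $G_\phi$, and why both cited constructions re-embed the graph during the construction (Hossain--Rahman via the component-moving operations mentioned above, Chiang--Lin--Lu analogously for orderly spanning trees). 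Consequently, any proof that keeps the chosen embedding and hopes to route every tree edge through an original edge of $G$ must fail on such instances, no matter how the canonical ordering or the triconnected augmentation is chosen. Your local verification of C1 and C2 for the canonical-ordering tree of an internally triangulated triconnected plane graph is plausible, but the reduction from general $G$ --- the re-embedding and the avoidance of dummy tree edges --- is exactly what is missing.
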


Consider an embedded plane graph $G$ for which    a good spanning tree $T$ exists. 
We say that a non-tree edge $e$ of $G$ \emph{covers}  vertex $u$
if $u$ lies in the inner face delimited by the simple cycle formed by tree-edges of $T$ and $e$. Vertices on the cycle are not covered by edge $e$.

\begin{lemma}
\label{lem:innerVertices}
If a planar graph $G$ has a $k$-inner embedding,
then $G$ has a $k'$-inner embedding which contains a good spanning tree $T$, where $k' \leq k$.
Moreover, each non-tree edge covers  at most $k$ of $T$'s leaves. 
%\marginpar{For a proof, see Appendix.}
\end{lemma}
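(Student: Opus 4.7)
The plan is to treat the two claims separately. The first claim is essentially an instantiation of Theorem~\ref{theorem:good_trees} that is careful about the outer face, while the second claim is a clean Jordan-curve-style counting argument that exploits the very definition of a $k$-inner embedding.

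For the first claim, I would begin with the given $k$-inner planar embedding $G_\phi$ of $G$ and invoke the construction underlying Theorem~\ref{theorem:good_trees} while keeping the outer face of $G_\phi$ fixed. The Chiang--Lin--Lu/Hossain--Rahman procedure produces a planar embedding equipped with a good spanning tree rooted at any chosen vertex on a pre-specified outer face; in the course of the construction only the interior of the graph (relative to the chosen outer face) is possibly re-embedded. Hence I can take the outer face of the resulting embedding to coincide with (or contain) the outer face of $G_\phi$, so every vertex that lay on the outer boundary of $G_\phi$ remains there. Consequently the new embedding is $k'$-inner for some $k' \leq k$, and by construction it admits a good spanning tree $T$.

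For the second claim, fix any non-tree edge $e=(u,v)$ of $T$ and let $C_e$ denote the simple cycle formed by $e$ together with the tree path between $u$ and $v$ in $T$. In the fixed plane embedding, $C_e$ is a Jordan curve and partitions the plane into a bounded inner region and an unbounded outer region. By the definition of ``covers'' given just before the lemma, every leaf $\ell$ of $T$ that is covered by $e$ lies strictly in the bounded region of $C_e$ and, in particular, $\ell\notin C_e$. Because the outer face of $G$ is the unbounded face, it is entirely contained in the unbounded region of $C_e$, so $\ell$ cannot belong to the boundary of the outer face. Thus every leaf of $T$ covered by $e$ is an inner vertex of $G$, and since the embedding has at most $k'\leq k$ inner vertices, at most $k$ leaves of $T$ are covered by $e$.

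The main obstacle I expect is in the first claim, where I must make sure that applying the good-spanning-tree construction does not promote any outer-face vertex to an inner vertex. This comes down to the known flexibility of the Chiang--Lin--Lu construction in allowing an arbitrary outer face and root to be pinned down in advance; once that is verified, the second claim is essentially a one-line Jordan-curve observation combined with the $k$-inner hypothesis.
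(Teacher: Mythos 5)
Your proposal is correct and follows essentially the same route as the paper: the first claim rests on the behaviour of the Hossain--Rahman construction (the paper phrases it as the algorithm only moving $u$-components and $(u,v)$-split components out of tree-induced cycles, so the outer face can only gain vertices and $k'\leq k$; your ``outer face is preserved or enlarged'' statement is the same appeal to that cited construction), and the second claim is the same observation that any covered leaf lies strictly inside the cycle and hence is an inner vertex, of which there are at most $k$. Your Jordan-curve elaboration of the second point is just a more explicit version of the paper's one-line remark.
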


\begin{comment}
\begin{proof}
The algorithm that supports Theorem~\ref{theorem:good_trees} (see~\cite{HR:2015}), when given 
a $k$-inner embedding, it produces a planar embedding $G_\phi$ which contains a good spanning tree $T$
by changing the original embedding. It does so by  only moving \emph{$u$-components} and \emph{$(u,v)$-split components} out of a cycle induced by the constructed good spanning tree\footnote{For the definitions of a \emph{$u$-components} and \emph{$(u,v)$-split components} ans well as details of the good spanning tree construction algorithm see~\cite{HR:2015}.}.
Therefore, the number of vertices in the outer face of $G_\phi$ is equal or greater than the number of vertices in the original embedding of $G$,
which implies that $G_\phi$ is a $k'$-inner planar graph, where $k'\leq k$.
Moreover there are at most $k$ leaf vertices that are covered by any non-tree edge since there are at most $k$ inner vertices.
\qed
\end{proof}
\end{comment}

\section{$k$-inner Monotone Drawings}
The general idea for producing a monotone drawing  of plane $k$-inner graph $G$ which has a good spanning tree $T$ is to first obtain a monotone drawing of $T$ satisfying the properties of Thm~\ref{thm:convexSecondOctTree} and then to insert the remaining non-tree edges in a way that the drawing remains planar. The insertion of a non-tree edge may require to slightly adjust  the drawing obtained up to that point, resulting in a slightly larger drawing.
As it turns out, the insertion of a subset of the non-tree edges may violate the planarity of the drawing and, moreover, if these edges are considered in a proper order, the increase on the size of the drawing can be kept small (up to a factor of $k$ for each dimension).

Consider a plane graph $G$ that has a good spanning tree $T$. For any non-tree edge $e$ of $G$, we denote by $C(e)$ the set of leaf-vertices of $T$ covered by $e$. A non-tree edge $e$ is called a \emph{leader edge} if $C(e) \neq \emptyset$ and there doesn't exist another edge $e^\prime$ such that $C(e)=C(e^\prime)$ with $e^\prime$ lying in the inner of the cycle induced by the edges of $T$ and $e$. In Fig.~\ref{fig:goodTree}, leader edges are drawn as  dashed  edges. We also have that $C(e_1)=C(e_2)=C(e_3)=\left\{ v_1 \right\}$, $C(e_4)=\left\{ v_1, v_2 \right\}$, 
$C(e_5)=C(e_6)=\left\{ v_3, v_4 \right\}$, and $C(e_7)=\left\{ v_3, v_4, v_5 \right\}$.

\begin{lemma}
\label{lem:numberOfLeaders}
Let $G$ be a $k$-inner plane  graph that has a good spanning tree. Then, there exist at most $k$ leader  edges in $G$.
%\marginpar{For a proof, see Appendix.}
\end{lemma}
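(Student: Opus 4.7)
The plan is to exhibit an injection from the set $L$ of leader edges into the set of inner vertices of $G$. Since $G$ is $k$-inner, that set has size at most $k$, and the bound $|L|\leq k$ follows immediately.

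First, I would show that any leaf of $T$ covered by some non-tree edge is necessarily an inner vertex of $G$. This uses the fact that the root $r$ lies on the outer face and that $T$ respects the embedding: the simple cycle $T\cup\{e\}$ induced by any non-tree edge $e$ separates its interior from the outer face, so a leaf enclosed by it cannot also be incident to the outer face. Thus all potential targets of the injection already lie among the at most $k$ inner vertices. Next, I would organise $L$ by the containment order $\prec$ on the regions enclosed by leader cycles: $e'\prec e$ exactly when the cycle of $e'$ lies strictly inside the cycle of $e$. By the definition of a leader edge, $e'\prec e$ forces $C(e')\subsetneq C(e)$, and the maximal elements of $\{e'\prec e\}$ (the ``children'' of $e$) have cycles with pairwise disjoint interiors, so their coverages are disjoint subsets of $C(e)$.

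With this structure in hand, I would construct $\phi:L\to\{\text{inner vertices}\}$ by processing leader edges bottom-up in $\prec$ and, for each $e$, picking $\phi(e)\in C(e)$ different from every previously assigned $\phi(e')$ with $e'\prec e$. Injectivity is immediate from the construction. The only thing that can obstruct the construction is the possibility that the already-assigned $\phi(e')$'s cover all of $C(e)$; equivalently, that the union $\bigcup_{e'\prec e}C(e')$ equals $C(e)$. Ruling this out is the heart of the argument.

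The main obstacle is therefore the claim that for every leader edge $e$ there is a ``private'' leaf in $C(e)$ not covered by any strictly smaller leader edge. To establish this, I would argue by contradiction: if the children of $e$ collectively exhausted $C(e)$, then by splicing together the children's non-tree edges with the tree paths along $e$'s cycle — and exploiting conditions C1 and C2 of the good spanning tree, which pin down where non-tree edges incident to each vertex may occur via the partition into $X_v,Y_v,Z_v$ — one could construct a non-tree edge with the same coverage as $e$ that lies strictly inside $e$'s cycle, contradicting the status of $e$ as a leader. Making this topological/structural argument precise (in particular, identifying the relevant vertex on $e$'s cycle from which to extract the offending non-tree edge, and handling the left/right cases coming from $u_i^L$ and $u_i^R$) is where the real work of the proof lies.
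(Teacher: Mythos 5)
Your first two steps coincide with the paper's own (sketched) proof: covered leaves are inner vertices, so each $C(e)$ is a nonempty subset of a set of size at most $k$, and the coverages of distinct leader edges form a laminar family of pairwise distinct sets. You are also right that this alone does not give the bound---a laminar family of distinct nonempty subsets of a $k$-element ground set can have $2k-1$ members---so the entire weight of your argument rests on the ``private leaf'' claim: for every leader $e$, some leaf of $C(e)$ is covered by no leader whose cycle lies strictly inside that of $e$. The paper itself glosses over this point (``the proof then easily follows''), so isolating it as the crux is to your credit.

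However, your proposed justification of that crux does not work. Being a leader only forbids the existence of an \emph{actual edge} $e'$ of $G$ with $C(e')=C(e)$ drawn inside the cycle of $e$; the object obtained by ``splicing together the children's non-tree edges with tree paths'' is not an edge of $G$, so exhibiting such a curve yields no contradiction with the leader status of $e$. Indeed, in the scenario you must exclude---two leaders inside $e$ with disjoint coverages $A$ and $B$ and $A \cup B = C(e)$---no edge of $G$ with coverage $C(e)$ need exist inside $e$'s cycle at all, so the definition of leader is untouched. What actually rules this scenario out is an argument about the plane embedding and the good spanning tree, not about the definition: for instance, condition C1 forces a non-tree edge whose endpoints both lie on the tree path of $e$'s cycle to join the two branches of that path (an edge within one branch would join an ancestor--descendant pair), and from this one argues that an edge whose cycle covers exactly $A$ and an edge whose cycle covers exactly $B$ would have endpoints interleaving along $e$'s cycle and hence be forced to cross, contradicting planarity; the cases where such edges are incident to covered leaves need separate treatment. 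Until a claim of this type is proved, your injection $\phi$ is not defined for every leader, and you have only established the bound $2k-1$, not $k$; so the proposal has a genuine gap at its central step.
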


%\begin{comment}
\begin{proof} [Sketch]
Firstly observe that a boundary vertex of $G$ that is a leaf-vertex  in $T$ cannot be covered by any edge of $G$. That is, the set $C(e)$, for any non-tree edge $e$, contains only inner-vertices of $G$, and thus, $|C(e)| \leq k$.
The proof then easily follows by observing that for any two distinct leader edges $e_1, ~e_2$, exactly one of the following  statements holds: 
i) $C(e_1) \subset C(e_2)$,
ii) $C(e_2) \subset C(e_1)$,
iii) $C(e_1) \cap C(e_2) = \emptyset$. 
\qed
\end{proof}
%\end{comment}

\begin{comment}
In the following lemma, and in the rest of the paper, all inserted edges are drawn  as straight line segments.
\end{comment}

\begin{lemma}\label{lem:nonTreeOther}
Let $G$ be a plane graph that has a good spanning tree $T$ and
let $\Gamma_T$ be a monotone drawing of $T$ that is near-convex and non-strictly slope-disjoint.
Let $\Gamma_L$ be the  drawing produced if we add all leader  edges  of $G$ to $\Gamma_T$
and $\Gamma$  be the drawing produced if we add all non-tree edges  of $G$ to $\Gamma_T$.
Then,  $\Gamma_L$ is planar
if and only if  $\Gamma$ is planar.
\end{lemma}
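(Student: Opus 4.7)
The $(\Leftarrow)$ direction is immediate: every edge of $\Gamma_L$ is also an edge of $\Gamma$, so if $\Gamma$ is planar then so is the sub-drawing $\Gamma_L$. All the content of the lemma is in the $(\Rightarrow)$ direction, and this is what the plan below addresses. The strategy is to start from $\Gamma_L$, which is assumed planar, and show that every non-leader, non-tree edge $e$ can be added as a straight-line segment without introducing a crossing. I would process the missing edges in an order compatible with nesting (e.g.\ smallest enclosed area first), so that by the time $e$ is inserted every edge $e'$ with $C(e') \subseteq C(e)$ that lies inside the cycle formed by $T$ and $e$ is already present and drawn without crossings.

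Fix an arbitrary non-tree edge $e=(u,v)$ that is not a leader and let $P_{uv}$ denote the tree path between $u$ and $v$ in $T$. Let $R_e$ be the closed region bounded by the straight-line segment for $e$ together with the polyline $P_{uv}$ in the already-drawn $\Gamma_T$. I would show that the open interior of $R_e$ is disjoint from the drawings of all edges already placed, from which the absence of a crossing follows at once. I split into the two possibilities allowed by the definition of a leader edge.

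Case 1: $C(e)=\emptyset$. Since no leaf of $T$ lies inside $R_e$, any subtree of $T$ hanging off $P_{uv}$ either lies entirely outside $R_e$ or would have to re-exit $R_e$ through $P_{uv}$ or through the segment $e$. The near-convex property at interior vertices of $P_{uv}$ (together with the slope-disjointness of sibling subtrees) forbids a subtree from re-entering through a tree edge, and exiting through $e$ would mean a tree edge crosses $e$ in $\Gamma_T$, contradicting the planarity of $\Gamma_T$. Hence the interior of $R_e$ is empty of tree vertices. Any other non-tree edge $f$ entering $R_e$ would, by the same argument, need one of its endpoints in $R_e$, forcing $C(f)\neq\emptyset$ to contain some leaf inside $R_e$, which there is none. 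So the segment for $e$ may be inserted.

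Case 2: $C(e)\neq\emptyset$ and $e$ is not a leader. Then there is a leader edge $e'$ already in $\Gamma_L$ with $C(e')=C(e)$ lying inside the cycle formed by $T$ and $e$. The region $R_e$ then decomposes, via the drawings of $e'$ and of the two short tree paths connecting the endpoints of $e'$ to those of $e$, into a thin ``lens'' between $e$ and $e'$ (containing no leaves, again by $C(e)=C(e')$) and the interior region bounded by $e'$ and a sub-polyline of $P_{uv}$. By Case 1 applied to the lens (which has no enclosed leaves), no drawn edge crosses into the lens, and any edge inside the $e'$-region that could reach $e$ would first have to cross $e'$, contradicting planarity of $\Gamma_L$. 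This shows $e$ may be inserted.

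The main obstacle is Case 1 and the ``lens'' step of Case 2, namely ruling out the possibility that a subtree dangling off $P_{uv}$ has its drawing partly inside $R_e$ even though its leaves are outside. Making this rigorous is where the near-convex and non-strictly slope-disjoint hypotheses on $\Gamma_T$ (Theorem~\ref{thm:convexSecondOctTree}) must be used carefully: slope-disjointness confines each subtree to a specific angular cone at its root, and near-convexity forces consecutive subtrees to form convex angles, so that a subtree whose leaves lie outside $R_e$ cannot ``dip into'' $R_e$. Once this geometric containment is established, the argument that no non-tree edge either crosses $e$ collapses to the two cases above, completing the proof.
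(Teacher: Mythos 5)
Your $(\Leftarrow)$ direction and your overall plan (reduce everything to inserting the ordinary edges one at a time, pairing each with a leader edge having the same covered set) coincide with the paper's. The problem is that the step carrying all the weight is missing, and the one justification you give for it is circular. In Case 1 you argue that a subtree hanging off $P_{uv}$ cannot exit the region $R_e$ through the segment for $e$ because that ``would mean a tree edge crosses $e$ in $\Gamma_T$, contradicting the planarity of $\Gamma_T$''; but $e$ is not an edge of $\Gamma_T$ --- a tree edge crossing the newly drawn straight segment is exactly the event the lemma must exclude, and the planarity of $\Gamma_T$ (which only concerns tree--tree crossings) yields no contradiction. You then explicitly defer the remaining possibility (a subtree ``dipping into'' $R_e$) as ``the main obstacle'' to be made rigorous later; that obstacle is the entire content of the $(\Rightarrow)$ direction. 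The paper discharges it with a separate technical lemma (Lemma~\ref{lem:convexLeaderRegion}): for a leader $e=(u,v)$, all vertices of the rightmost path $R(u)$ and the leftmost path $L(v)$ are in convex position --- proved by using non-strict slope-disjointness to show that the two bottom edges of these paths, extended downward, converge while the two top edges, extended upward, diverge, with near-convexity handling the intermediate vertices --- and, using the innermost property of leaders together with condition C1 of good spanning trees, every ordinary edge $e'$ with $C(e')=C(e)$ has one endpoint on $R(u)$ and the other on $L(v)$. With $\Gamma_L$ planar, the region bounded by $e$, $R(u)$, $L(v)$ and the segment joining their end leaves is then empty and convex, so $e'$ is a chord of an empty convex region and can be inserted. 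Your proposal has no counterpart to either half of this lemma: the location of the ordinary edge's endpoints on the two paths attached to the leader's endpoints is assumed rather than derived, and no convexity statement is available to keep a subtree whose leaves lie outside from entering your ``lens''.

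Two smaller but real issues. First, you silently identify $C(e)=\emptyset$, a property of the fixed embedding of $G$, with ``no leaf of $T$ lies inside $R_e$'', a property of the drawing; transferring the former to the latter needs an argument, as does the well-definedness of $R_e$ itself (i.e., that the segment for $e$ does not cross the polyline $P_{uv}$ --- again part of what is being proved). Second, in Case 2 the edge with $C(e')=C(e)$ inside the cycle of $e$ that the definition of leader provides need not itself be a leader; you should take the innermost such edge. These are fixable, but together with the circular step above they leave the proof without its essential geometric core.
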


Lemma~\ref{lem:nonTreeOther} indicates that we only have to adjust the original drawing of $T$ so that after the addition of all leader edges it is still planar. 
Then, the remaining non-tree edges can be drawn without violating planarity. 
Note that,  for the proof of Lemma~\ref{lem:nonTreeOther}, it is crucial  that the original drawing $\Gamma_T$ of $T$ is near-convex.

Consider the near-convex drawing of a good spanning tree $T$ rooted at $r$ that is non-strictly slope-disjoint. Assume that $T$ is drawn in the first quadrant with its root $r$ at $(0,0)$. 
Let $u\neq r$ be a vertex of $T$ and $p^u$ be its parent. Vertices $u$ and $p^u$ are drawn at grid points 
$(u_x, u_y)$ and $(p^u_x, p^u_y)$, resp. 
We define  the \emph{reference vector} of $u$ with respect to its parent in $T$, denoted by  
$\overrightarrow{\mathrm{u}}$, to be vector $\overrightarrow{\mathrm{u}}= (u_x - p^u_x, u_y - p^u_y)$. We emphasize that the reference vector $\overrightarrow{\mathrm{u}}$ of a tree vertex $u$ is defined wrt the original drawing of $T$ and does not change even if  the drawing of $T$ is  modified by our drawing algorithm.

The \emph{elongation of edge $(p^u,u)$ by a factor of $\lambda,~ \lambda \in \Bbb{N}^{+}$,} (also referred to as a \emph{$\lambda$-elongation})
translates the drawing of subtree $T_u$ along the direction of edge $(p^u,u)$ so that the new length of $(p^u,u)$ increases by  $\lambda$ times the  length of the reference vector $\overrightarrow{\mathrm{u}}$ of $u$. 
Since the elongation factor is a natural number, the new drawing is still a grid drawing. 
Let $\overrightarrow{\mathrm{u}}=(u^d_x, u^d_y)$.
After a  $\lambda$-elongation of $(p^u,u)$, vertex $u$ is drawn at point 
$u^{\prime} = (u_x +\lambda u^d_x, ~u_y+ \lambda u^d_y)$. A 0-elongation leaves the drawing unchanged. Note that, by appropriately selecting the elongation factor $\lambda$ for an upward tree edge $(p^u,u)$, we can reposition $u$ so that it is placed above any given point $z=(z_x, z_y)$.

If we insert the leader edges in the  drawing of the good spanning tree in an arbitrary order, we may have to adjust the drawing more than one time for each inserted edge. This is due to  dependencies between leader edges.  
Fig.~\ref{fig:dep1_2} describes the two types of possible dependencies. In the case of Fig.~\ref{fig:dep1_2}(a), the leader edge must by inserted first since $C(e_1)=\left\{ v_5 \right\} \subset \left\{v_3,v_4,v_5 \right\} = C(e_2)$.
Inserting leader $e_1$ so that it is not intersected by any tree edge can be achieved by elongating edges $(v_2,v_3)$ and 
$(v_2, v_4)$ by appropriate factors so that vertices $v_3$ and $v_4$ are both placed at grid points above (i.e., with larger $y$ coordinate) vertex $v_5$. 

In the case of Fig.~\ref{fig:dep1_2}(b), we have that $C(e_1)= \left\{ u_4 \right\}$,   
$C(e_2)= \left\{ u_8 \right\}$, and $C(e_1) \cap C(e_2) = \emptyset$. However, leader $e_1$ must be inserted first since one of its endpoints ($u_3$) is an ancestor of an endpoint ($u_5$) of $e_2$. Again, inserting leader $e_1$ so that it is not intersected by any tree edge can be achieved by elongating edges $(u_1, u_2)$ and 
$(u_1, u_3)$ by appropriate factors so that vertices $u_2$ and $u_3$ are both placed at grid points above vertex $u_4$.

\begin{lemma}
\label{lem:leaderOrdering}
Let $G$ be a plane graph that has a good spanning tree $T$ and
let $\Gamma_T$ be a drawing of $T$ that satisfies the properties of Thm~\ref{thm:convexSecondOctTree}. Then, there exists an ordering of the leader edges, such that if they are inserted into $\Gamma_T$ (with the appropriate elongations) in that order they need to be examined exactly once.
%\marginpar{For a proof, see Appendix.}
\end{lemma}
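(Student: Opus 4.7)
The plan is to introduce a strict partial order $\prec$ on the leader edges of $G$ that captures both dependency types illustrated in Fig.~\ref{fig:dep1_2}, and then insert the leaders according to any linear extension of $\prec$.

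Specifically, I set $e \prec e'$ whenever either $C(e) \subsetneq C(e')$ (the nested-coverage dependency of Fig.~\ref{fig:dep1_2}(a)) or some endpoint of $e$ is a proper ancestor in $T$ of some endpoint of $e'$ (the ancestor dependency of Fig.~\ref{fig:dep1_2}(b)). The first task is to show that $\prec$ is acyclic. Each of the two underlying relations is individually a strict partial order: proper inclusion of subsets of the leaves of $T$ and proper ancestry in $T$ are both transitive and antisymmetric. To rule out mixed cycles, I would observe that the endpoints of a leader edge $e$ are always strict ancestors in $T$ of every leaf in $C(e)$, and use Condition~C1 of the good spanning tree to argue that if $C(e) \subsetneq C(e')$ then no endpoint of $e'$ can be a proper descendant of an endpoint of $e$. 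Hence the two sub-relations are compatible, $\prec$ is a partial order, and a linear extension $e_1, e_2, \ldots, e_m$ exists; by Lemma~\ref{lem:numberOfLeaders} there are at most $k$ such edges, so the extension is computable in linear time.

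Next, I process the leader edges in the order of this linear extension. When $e_i = (a,b)$ is reached, I apply the $\lambda$-elongation operation to the tree edges on the two paths running from $\mathrm{lca}(a,b)$ down to $a$ and to $b$, choosing factors just large enough to place both $a$ and $b$ strictly above every leaf belonging to $C(e_i)$. Adequate factors exist because Thm.~\ref{thm:convexSecondOctTree} puts the drawing inside the second octant, so every reference vector has a strictly positive $y$-component and iteration lifts any ancestor arbitrarily high. The heart of the argument is then an inductive claim: after processing $e_i$, every previously placed leader $e_j$ (with $j<i$) remains planarly drawn as a straight segment, so it is never re-examined. Since the sequence is a linear extension of $\prec$, we have $e_i \not\prec e_j$ under either clause of the definition, and combining this with the nested structure of good-spanning-tree cycles forces the elongated tree edges used for $e_i$ to either (i) lie in a subtree of $T$ disjoint from the one spanned by $e_j$, leaving $e_j$ untouched, or (ii) sit strictly above both endpoints of $e_j$ on a common root-to-subtree path, so that $e_j$'s subtree is rigidly translated by a single vector and the drawn segment $e_j$ remains a valid planar straight line.

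The step I expect to be most delicate is the acyclicity and compatibility check for $\prec$. Ruling out configurations where both $C(e)\subsetneq C(e')$ holds and some endpoint of $e'$ is an ancestor of an endpoint of $e$ requires a careful look at the two cycles formed by $e$ and $e'$ together with the tree edges of $T$, and invoking Condition~C1 of the good spanning tree in the right place. The secondary delicate point is the rigid-translation argument in the inductive step, where one must verify that, in case (ii) above, both endpoints of $e_j$ actually lie inside a single subtree rooted at one of the $\lambda$-elongated edges; this follows from the nested coverage structure enforced by $\prec$, but has to be made explicit by case analysis on whether $C(e_j) \subsetneq C(e_i)$ or the two coverage sets are disjoint.
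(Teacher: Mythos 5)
Your core strategy coincides with the paper's: both dependency types (coverage nesting as in Fig.~\ref{fig:dep1_2}(a), endpoint ancestry as in Fig.~\ref{fig:dep1_2}(b)) are collected into a dependency relation on the leader edges, acyclicity is claimed, and the insertion order is a topological order / linear extension; the paper phrases this as a dependency DAG $D$ that is topologically sorted. However, two concrete points in your elaboration do not hold up. First, the observation you base acyclicity on --- that the endpoints of a leader edge $e$ are strict ancestors of every leaf in $C(e)$ --- is false. A covered leaf may hang off an \emph{interior} vertex of the tree path between the two endpoints of $e$ (for instance off their lowest common ancestor); indeed, in the paper's own discussion of Fig.~\ref{fig:dep1_2}(a), the leaf $v_5\in C(e_1)$ hangs off $v_2$ and is not a descendant of either endpoint of $e_1$. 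So your compatibility argument for ruling out mixed cycles is unsupported as stated (the paper, for its part, only asserts DAG-ness from planarity and the good-spanning-tree property, so this is where a careful proof is genuinely needed).

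Second, your insertion step elongates \emph{all} tree edges on the two paths from $\mathrm{lca}(a,b)$ down to $a$ and $b$. This is unnecessary --- since every reference vector has a strictly positive $y$-component, a single $\lambda$-elongation of the parent edge $(p^a,a)$ already lifts $a$ arbitrarily high --- and it actually endangers the ``examined exactly once'' claim you are proving. Elongating an edge $(x,y)$ with $y$ a proper ancestor of $a$ rigidly translates the whole subtree $T_y$, which may contain exactly one endpoint of a previously inserted leader $e_j$ that has \emph{no} dependency with $e_i$ in your relation $\prec$ (its endpoint under $y$ need not be a descendant of $a$ or $b$, and the coverage sets may be disjoint). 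Such an $e_j$ is neither left untouched nor rigidly translated, so your dichotomy (i)/(ii) in the inductive step fails and $e_j$ could be distorted and require re-examination. The paper avoids this precisely by elongating only $(p^u,u)$ and $(p^v,v)$: any leader with an endpoint strictly inside $T_u$ or $T_v$ depends on $e=(u,v)$ via the ancestor dependency and therefore appears later in the order, so earlier leaders are either untouched or translated as a whole. Restricting your elongation to the two parent edges, and replacing the false ancestry observation by a genuine planarity/good-spanning-tree argument for acyclicity, would bring your proof in line with the paper's.
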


Our method for producing monotone drawings of $k$-inner planar graphs is summarized in Algorithm~\ref{algo:k_Inner}. A proof  that the produced drawing is actually a monotone plane drawing follows from the facts that i) there is always a good spanning tree with at most $k$ inner vertices (Lemma~\ref{lem:innerVertices}), ii) there is a monotone tree drawing satisfying the properties of Thm~\ref{thm:convexSecondOctTree}, iii) the operation of edge elongation on the vertices of a near-convex monotone non-strictly slope-disjoint tree drawing maintains these properties, iv) the ability to always insert the leader  edges into the drawing without violating planarity (through elongation), and v) the  ability to insert the remaining non-tree edges (Lemma~\ref{lem:nonTreeOther}).

\begin{algorithm}[t]
\caption{Monotone drawing of $k$-inner planar graphs}
\label{algo:k_Inner}
\begin{algorithmic}[1]
\Procedure{$k$-InnerPlanarMonotoneDrawing}{G}
\State \hspace*{-0.5cm}Input: An $n$-vertex planar graph $G$ and an embedding of  $G$ with $k$ inner vertices. 
\State \hspace*{-0.5cm}Output: A monotone planar drawing of $G$ on a   $2(k+1)n \times 2(k+1)n $ grid.
\State
	\State Let $G_\phi$ be an embedding of $G$ that has a good spanning tree $T$.
	\State $L \gets \text{List of the  \emph{leader}  edges of~} G, \text{~ordered wrt their dependencies}$ (Lemma~\ref{lem:leaderOrdering}).
	\State $\Gamma_T \gets \text{ Monotone drawing of~} T \text{~satisfying all properties of Thm~\ref{thm:convexSecondOctTree}}$.
	\State $\Gamma \gets \Gamma_T$
	\While{$L$ is not empty}
		\State Let $e=(u,v)$ be the next edge in $L$. Remove $e$ from $L$.
		\State Let $w$ be the vertex of $C(e)$ (drawn in $\Gamma$ at $(w_x, w_y)$) with the largest 
		\State  ~~~~~~$y$-coordinate.
		\State Let $p^u$ and $p^v$ be the parents of vertices $u$ and $v$ in $T$, resp.
		\State \emph{Elongate} Line edges $(p^u, u)$ and $(p^v,v)$ by appropriate factors so that $u$ and
		\State  ~~~~~~$v$ are placed in $\Gamma$ above vertex $w$.
		\State Insert edge $e$ into drawing $\Gamma$.
	\EndWhile
	\State Insert all remaining non-tree edges  into drawing $\Gamma$.
	\State Output $\Gamma$.
\EndProcedure
\end{algorithmic}
\end{algorithm}

Let $\overrightarrow{\mathrm{u}}=(u^d_x, u^d_y)$ and 
$\overrightarrow{\mathrm{v}}=(v^d_x, v^d_y)$ be the reference vectors of $u$ and $v$, resp.
When we process  leader edge $e=(u,v)$ (lines 9-15 of Algorithm~\ref{algo:k_Inner}),  factors
$\lambda_u=\max \left(0, \left\lceil  \frac{w_y-u_y}{u^d_y} \right\rceil \right)$ and
$\lambda_v=\max \left(0,  \left\lceil  \frac{w_y-v_y}{v^d_y} \right\rceil \right)$ are used for the elongation of edges 
$(p^u,u)$ and   $(p^v,v)$, resp. 
The use of these elongation factors ensures that both $u$ and $v$ are placed above vertex   $w$, and thus the insertion of edge $e$ leaves the drawing planar. When the leader edges are processed in the order dictated by their dependencies (line 6 of  Algorithm~\ref{algo:k_Inner}), we can show that:

\begin{comment}  %  OLD WRONG. REF VECTORS NOT NEEDED
\begin{lemma}
\label{lem:edgeElongationSubtreeSize}
Let $\Gamma$ be the partial drawing of $G$  immediately after the insertion of  leader edge $e=(u,v)$ in Algorithm~\ref{algo:k_Inner}. Let $\overrightarrow{\mathrm{u}}=(u^d_x, u^d_y)$ and 
$\overrightarrow{\mathrm{v}}=(v^d_x, v^d_y)$ be the reference vectors of $u$ and $v$, resp. Then, in $\Gamma$ the drawings of $T_u$ and $T_v$ are contained in grids of size
$(2n-u^d_x) \times (2n-u^d_y)$ and $(2n-v^d_x) \times (2n-v^d_y)$, resp.
\end{lemma}
\end{comment}

\begin{lemma}
\label{lem:edgeElongationSubtreeSize}
Let $\Gamma_T$ be the  drawing of the good spanning tree $T$  satisfying the properties of Thm~\ref{thm:convexSecondOctTree} and let  $\Gamma_{(u,v)}$ be the  drawing of $G$ immediately after the insertion of  leader edge $e=(u,v)$ by Algorithm~\ref{algo:k_Inner}. Let  $u$ and $v$ be drawn in $\Gamma_T$ at points $(u_x, u_y)$ and $(v_x,v_y)$, resp. Then, in $\Gamma_{(u,v)}$ the drawings of $T_u$ and $T_v$ are contained in $(2n-u_x) \times (2n-u_y)$ and $(2n-v_x) \times (2n-v_y)$ grids, resp.
\end{lemma}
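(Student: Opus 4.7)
The plan is to reduce the statement to a bound that holds already in the original tree drawing $\Gamma_T$. Specifically, I would first show that in $\Gamma_T$ the subtree $T_u$ fits in a $(2n-u_x)\times(2n-u_y)$ box, and then argue that every elongation carried out up to and including the insertion of $(u,v)$ acts on $T_u$ as a rigid translation, which preserves the bounding-box dimensions. The argument for $T_v$ would be entirely symmetric.

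The first step is short: by Theorem~\ref{thm:convexSecondOctTree} the root lies at $(0,0)$, the whole drawing lies in the second octant, and all coordinates are bounded by $2n$. Every tree edge therefore has slope in $[\pi/4,\pi/2]$, and a direct induction along tree paths downward from $u$ shows that every descendant of $u$ has $x$-coordinate at least $u_x$ and $y$-coordinate at least $u_y$. Combined with the $2n$ upper bounds on both coordinates, this confines $T_u$ in $\Gamma_T$ to a $(2n-u_x)\times(2n-u_y)$ box.

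The second step is the heart of the proof. I first note that condition C1 of a good spanning tree rules out non-tree edges between ancestor-descendant pairs, so $u\notin T_v$ and $v\notin T_u$; in particular, the $(p^v,v)$-elongation executed in the same iteration as the $(p^u,u)$-elongation does not touch $T_u$. Every elongation performed strictly before the iteration that inserts $(u,v)$ is of the form $(p^x,x)$ for some endpoint $x$ of a previously processed leader edge; I would then split into cases based on the position of $x$ relative to $u$. If $x$ lies on the root-to-$u$ tree path, including $x=u$ itself, the elongation translates $T_u$ rigidly along $\overrightarrow{\mathrm{x}}$ and preserves its internal geometry. The case to be ruled out is $x$ being a strict descendant of $u$: this would make $u$ a strict ancestor of an endpoint of a leader edge processed before $(u,v)$, directly contradicting the ordering produced by Lemma~\ref{lem:leaderOrdering}. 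The ancestor-based dependency analysis is the main technical obstacle; once it is in place, $T_u$ in $\Gamma_{(u,v)}$ is a rigid translate of $T_u$ in $\Gamma_T$, and the $(2n-u_x)\times(2n-u_y)$ bound transfers verbatim, with the analogous bound for $T_v$ following by the symmetric argument.
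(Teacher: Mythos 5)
Your proposal is correct and follows essentially the same route as the paper: establish the $(2n-u_x)\times(2n-u_y)$ bound for $T_u$ in $\Gamma_T$ (via the second-octant/slope property), and then use the dependency ordering of Lemma~\ref{lem:leaderOrdering} to argue that no edge inside $T_u$ is elongated before $(p^u,u)$, so every prior elongation (and the one for $(p^v,v)$, excluded by condition C1) moves $T_u$ at most by a rigid translation, preserving the bounding-box dimensions. The only cosmetic omission is the third case where the elongated edge lies in a subtree disjoint from $T_u$, which trivially leaves $T_u$ untouched.
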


Based on Lemma~\ref{lem:edgeElongationSubtreeSize}, we can easily show the main result of our paper.

\begin{theorem}
\label{thm:kInnerPlanar}
Let $G$ be an $n$-vertex $k$-inner planar graph. 
Algorithm~\ref{algo:k_Inner} produces a planar monotone drawing of $G$ on a
$2(k+1)n \times 2(k+1)n$
 grid.
\end{theorem}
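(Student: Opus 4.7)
The plan is to establish the $2(k+1)n \times 2(k+1)n$ grid bound by induction on the number of leader edges processed; the other two required properties (planarity and monotonicity) will then follow from the discussion preceding the theorem, namely the preservation of near-convex slope-disjointness under elongation together with Lemmas~\ref{lem:innerVertices}--\ref{lem:leaderOrdering} and the fact that tree paths remain monotone paths of $G$.

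For the grid bound, let $\Gamma_i$ denote the drawing after $i$ leader edge insertions, with $H_i, W_i$ its height and width. The base case is $\Gamma_0 = \Gamma_T$, which fits in a $2n \times 2n$ box by Theorem~\ref{thm:convexSecondOctTree}. I would prove the inductive claim $H_{i+1} \leq H_i + 2n$ as follows. Let $(u,v)$ be the $(i+1)$-th leader, $(u_x, u_y)$ the original coordinates of $u$ in $\Gamma_T$, $(u_x^c, u_y^c)$ its current coordinates in $\Gamma_i$, and $w$ the highest vertex of $C(e)$ with current $y$-coordinate $w_y \leq H_i$. The elongation factor $\lambda_u = \ceil{(w_y - u_y^c)/u^d_y}$ (or $0$ if $w_y \leq u_y^c$) satisfies $\lambda_u u^d_y \leq (w_y - u_y^c) + u^d_y$, so $u_y^{\mathrm{new}} \leq w_y + u^d_y$. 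Since $\Gamma_T$ lies in the second octant with the root at the origin, $p^u$ has non-negative $y$-coordinate and therefore $u^d_y = u_y - p^u_y \leq u_y$. Combining this with Lemma~\ref{lem:edgeElongationSubtreeSize}, which gives $T_u$ a bounding box of height at most $2n - u_y$ in $\Gamma_{(u,v)}$, and using that $u$ sits at the bottom-left corner of this box (since every slope is at least $\pi/4$), the largest $y$-coordinate reached inside $T_u$ is bounded by
\[
u_y^{\mathrm{new}} + (2n - u_y) \;\leq\; w_y + u_y + (2n - u_y) \;=\; w_y + 2n \;\leq\; H_i + 2n.
\]
The symmetric argument applies to $T_v$, and everything outside $T_u \cup T_v$ is unmoved; hence $H_{i+1} \leq H_i + 2n$. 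Because elongations translate subtrees along fixed reference vectors and thus preserve every edge's slope, the drawing stays in the second octant throughout, so every vertex $p$ satisfies $p_x \leq p_y$, giving $W_i \leq H_i$ for all $i$. By Lemma~\ref{lem:numberOfLeaders} there are at most $k$ leader edges, so $H_k, W_k \leq 2n + 2kn = 2(k+1)n$, yielding the claimed grid.

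The remaining conclusions have already been assembled in the algorithm's accompanying discussion: the elongation rules applied in dependency order (Lemma~\ref{lem:leaderOrdering}) keep the drawing planar after every leader insertion, since lifting both endpoints of $(u,v)$ above every vertex of $C(e)$ prevents the segment from crossing any tree edge or any previously-drawn leader; Lemma~\ref{lem:nonTreeOther} then allows the remaining non-tree edges to be added without introducing crossings; and monotonicity is immediate because the modified drawing of $T$ is still non-strictly slope-disjoint (elongation preserves edge directions), so the tree path between any two vertices of $G$, already a path of $G$, remains monotone. The principal obstacle is the clean invocation of Lemma~\ref{lem:edgeElongationSubtreeSize} at the moment $(u,v)$ is inserted, since previously-processed leaders lying inside $T_u$ (which, by the ordering of Lemma~\ref{lem:leaderOrdering}, are scheduled before $(u,v)$) may already have shifted portions of $T_u$; the argument must show that all these prior shifts are absorbed within the stated $(2n-u_x)\times(2n-u_y)$ box, essentially by applying the same inductive estimate recursively within $T_u$.
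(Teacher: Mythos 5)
Your proposal is correct and takes essentially the same route as the paper: an induction over the leader-edge insertions using the per-step bound $u_y^{\mathrm{new}} \le w_y + u^d_y$, the $(2n-u_x)\times(2n-u_y)$ box of Lemma~\ref{lem:edgeElongationSubtreeSize}, and $u^d_y \le u_y$, which mirrors the paper's computation $2kn + u^d_y + 2n - u_y \le 2(k+1)n$ (the paper merely phrases the induction as peeling off the last leader edge). One clarification: the ``principal obstacle'' you flag is not real --- by the dependency order of Lemma~\ref{lem:leaderOrdering}, any leader with an endpoint in $T_u\setminus\{u\}$ is a \emph{descendant} of $(u,v)$ and hence is processed \emph{after} it, so no edge inside $T_u$ has been elongated when $(u,v)$ is inserted; this is precisely how the paper proves Lemma~\ref{lem:edgeElongationSubtreeSize}, so its statement applies cleanly and no recursive estimate within $T_u$ is needed.
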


\begin{comment}

\begin{corollary}
Let $G$ be an $n$-vertex outerplanar graph. 
Algorithm~\ref{algo:k_Inner} produces a planar monotone drawing of $G$ on a
$2n \times 2n$
 grid.
\end{corollary}
\end{comment}

A corollary of Thm~\ref{thm:kInnerPlanar} is that for   an $n$-vertex outerplanar graph $G$ 
Algorithm~\ref{algo:k_Inner} produces a planar monotone drawing of $G$ on a
$2n \times 2n$ grid. However, we can further reduce the grid size down to $n \times n$.

\begin{theorem}
Let $G$ be an $n$-vertex outerplanar graph. Then, there exists an $n \times n$  planar monotone grid drawing of $G$.
\end{theorem}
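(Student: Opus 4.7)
The plan is to follow the same overall scheme as Algorithm~\ref{algo:k_Inner}, but to replace the tree drawing subroutine by the more compact one of Theorem~\ref{thm:GD17_oneQuadTree}, exploiting the fact that outerplanar graphs trigger none of the elongations that inflate the grid in the general $k$-inner case. First, I apply Theorem~\ref{theorem:good_trees} together with Lemma~\ref{lem:innerVertices} to fix a planar embedding $G_\phi$ of $G$ that keeps all $n$ vertices on the outer face and admits a good spanning tree $T$ rooted at some vertex of that face. Since $G_\phi$ is $0$-inner, the set $C(e)$ is empty for every non-tree edge $e$, so by Lemma~\ref{lem:numberOfLeaders} the graph $G_\phi$ has no leader edges at all.

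Next, I draw $T$ with the algorithm of Theorem~\ref{thm:GD17_oneQuadTree}, which produces a monotone, non-strictly slope-disjoint drawing $\Gamma_T$ of $T$ in the first quadrant on an $n \times n$ grid with the root at the origin. Because the leader list $L$ is empty, the \emph{while}-loop of Algorithm~\ref{algo:k_Inner} is vacuous, no edge is ever elongated, and the bounding box of the tree drawing remains exactly $n \times n$. The final drawing $\Gamma$ is then obtained by inserting all non-tree edges of $G$ into $\Gamma_T$ as straight-line segments, exactly as in the last step of Algorithm~\ref{algo:k_Inner}.

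It remains to certify that $\Gamma$ is planar and monotone, and monotonicity is immediate, since the monotone tree $\Gamma_T$ already provides a monotone path between every pair of vertices of $G$. For planarity I would invoke Lemma~\ref{lem:nonTreeOther}, which asserts that the drawing obtained by inserting only the leader edges is planar if and only if the drawing with \emph{all} non-tree edges is planar; here the leader set is empty, so the hypothesis is trivially satisfied by $\Gamma_T$ itself. The main obstacle is that Lemma~\ref{lem:nonTreeOther} is stated under the hypothesis that the tree drawing is \emph{near-convex}, whereas Theorem~\ref{thm:GD17_oneQuadTree} does not explicitly guarantee near-convexity. To close this gap I would verify that the weighting construction of~\cite{GDOS:1} underlying Theorem~\ref{thm:GD17_oneQuadTree} does in fact yield a near-convex drawing; failing that, I would give a direct planarity argument from the good-tree conditions \textbf{C1} and \textbf{C2} together with non-strict slope-disjointness, showing that for any chord $(u,v)$ of the outer face the two endpoints sit in slope-disjoint subtrees dictated by the sets $X_v$, $Y_v$, $Z_v$, so the straight segment joining them is confined to a slab that is free of every other tree edge and of every previously inserted chord. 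Finally I would collect these facts into the bound: tree on an $n \times n$ grid, no elongation, planarity preserved, giving the desired $n \times n$ planar monotone grid drawing of $G$.
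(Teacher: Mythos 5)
Your proposal is correct and follows essentially the same route as the paper: outerplanar graphs have no leader edges, hence no elongations, so the $n \times n$ first-quadrant drawing of~\cite{GDOS:1} (Thm~\ref{thm:GD17_oneQuadTree}) suffices, with the non-tree edges then inserted via Lemma~\ref{lem:nonTreeOther}. The near-convexity gap you flag is resolved exactly as you anticipated: the paper uses that drawing ``appropriately modified so that it yields a near-convex drawing,'' the modification being the one described in the proof of Theorem~\ref{thm:convexSecondOctTree}, which does not increase the required grid size.
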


\begin{proof}
Simply observe that since an  outerplanar graph has no leader edges, the drawing produced by Algorithm~\ref{algo:k_Inner} is identical to that of the original drawing of the good spanning tree $T$. 
In Algorithm~\ref{algo:k_Inner} we used a $2n \times 2n$  drawing of $T$ in the second octant in order to simplify the elongation operation. 
Since outerplanar graphs have no leader edges,
they require no elongations and we can use instead the (first quadrant) $n \times n$ monotone tree drawing of~\cite{GDOS:1}, appropriately modified so that it yields a near-convex drawing.
\qed
\end{proof}

\section{Conclusion}
We defined the  class of $k$-inner planar graphs which bridges the gap between outerplanar and planar graphs.
For an $n$-vertex $k$-inner planar graph $G$, we provided  an algorithm that produces a $2(k+1)n \times 2(k+1)n$ monotone grid drawing of $G$.
%Note that the best known algorithm by Hossain and Rahman~\cite{HR:2015} uses an $O(n^2) \times O(n)$ grid irrespectively of whether  the graph is planar or outerplanar. 
Building algorithms for $k$-inner graphs that incorporate $k$ into their time complexity or into the quality of their solution is an interesting open problem.
%\vfill
%\vfill

\newpage
\bibliographystyle{splncs04}
\bibliography{bibl_09}

%\end{document}
\newpage
\appendix
\section*{Appendix}

\bigskip
\section{Additional material for Section~2}

\subsection{Some more definitions}
Let $\Gamma$ be a drawing of a graph $G$ and $(u,v)$ be an edge from vertex $u$ to vertex $v$ in $G$.
The slope of edge $(u,v)$, denoted by $slope(u,v)$, is the angle spanned by a counter-clockwise rotation that brings a horizontal half-line starting at $u$ and directed towards increasing $x$-coordinates to coincide with the half-line starting at $u$ and passing through $v$.

\bigskip

Angelini et al.~\cite{m:1} defined the notion of slope-disjoint tree drawings and proved that a every such drawing is monotone.  In order to simplify the presentation of our compact tree drawing algorithm presented in~\cite{GDOS:1}, we slightly extended the definition of \emph{slope-disjoint} tree drawings given by Angelini et al.~\cite{m:1}.   More specifically, we called a
 tree drawing $\Gamma$  of a rooted tree $T$  a \emph{non-strictly slope-disjoint} drawing if the following conditions hold:

\begin{enumerate}
\item For every vertex $u \in T$, there exist two angles $a_1(u)$ and $a_2(u)$, with 
$0 \boldsymbol{\leq} a_1(u)<a_2(u)\boldsymbol{\leq} \pi$ 
such that for every edge $e$ that is either in $T_u$ or enters $u$ from its parent, it holds that $a_1(u)<slope(e)<a_2(u)$.

\item For every two vertices $u, v \in T$ such that $v$ is a child of $u$, it holds that $a_1(u) \boldsymbol{\leq} a_1(v) < a_2(v) \boldsymbol{\leq} a_2(u)$.
  
\item For every two vertices $u_1, u_2$ with the same parent, it holds that either $a_1(u_1) < a_2(u_1) \boldsymbol{\leq} a_1(u_2) < a_2(u_2)$ or $a_1(u_2) < a_2(u_2) \boldsymbol{\leq} a_1(u_1) < a_2(u_1)$.
\end{enumerate}

The idea behind the original definition  of  slope-disjoint tree drawings is that all edges in the subtree $T_u$  as well as the edge entering $u$ from its parent will have slopes that \emph{strictly} fall within  the angle range 
$\left\langle a_1(u), a_2(u)\right\rangle$ defined for vertex $u$. $\left\langle a_1(u), a_2(u)\right\rangle$ is called the \emph{angle range of}  $u$ with   $a_1(u)$ and $a_2(u)$ being its \emph{boundaries}. In our extended definition, we allowed for angle ranges of adjacent vertices (parent-child relationship) or sibling vertices (children of the same parent)  to share angle range boundaries. Note that replacing the ``$\leq$'' symbols in our definition by the ``$<$'' symbol gives us the original definition of Angelini et al.~\cite{m:1} for the slope-disjoint tree drawings. Non-strictly slope-disjoint tree drawings are also monotone.

\bigskip
\subsection{Proof of Theorem~\ref{thm:convexSecondOctTree} }

\textbf{Theorem~2.}
\emph{Given an  $n$-vertex tree $T$ rooted at vertex $r$, we can produce in $O(n)$ time a monotone  drawing  of $T$ where: i) the root $r$ is drawn at $(0,0)$, ii) the drawing in non-strictly slope-disjoint, iii) the drawing is near-convex, iv)
the drawing is contained in the second octant (defined by two half-lines with a common origin and slope $\pi \over 4$ and $\pi \over 2$, resp.),   and v) it fits in a $2n \times 2n$ grid. }

\begin{proof}

We show how to extend/modify the algorithm presented in~\cite{GDOS:1} so that it satisfies all properties specified in the theorem. For brevity, we refer to the algorithm of~\cite{GDOS:1} as Algorithm \textsf{MQuadTD} (abbreviation for 
``\underline{M}onotone \underline{Quad}rant \underline{T}ree \underline{D}rawing'').

Algorithm  \textsf{MQuadTD} constructs a  non-strictly slope-disjoint tree drawing with the root of the tree drawn at $(0,0)$. So, the first two conditions are satisfied. In addition, the tree is drawn at the first quadrant of its root. We first show how Algorithm \textsf{MQuadTD} can be extended so that it produces near-convex drawings. Since Algorithm  \textsf{MQuadTD} produces a non-strictly 
slope-disjoint tree drawing, it splits   the angle-range of every internal tree vertex $u$ into as many slices as $u$'s children and assigns these slices to the children of  $u$. 

We denote by $p^u$ the parent of internal tree vertex $u$ . 
Note  that,  in a non-strictly slope-disjoint tree drawing that is drawn is an circular sector of size at most $\pi$, i.e., the angle-range of the root is $\left\langle 0,\alpha \right\rangle,~0<\alpha\leq \pi$, a  non-convex angle can be only formed between the edge $(p^u,u)$ entering an internal vertex $u$ and the edge connecting $u$ with its first or last child, say $v_l$ and $v_r$, resp. Moreover, in this case a convex angle may be formed only if the slope of edge $(p^u, u)$   falls within the angle-range of either $v_l$ or $v_r$. W.l.o.g., say that the slope of $(p^u, u)$  falls within the angle range of $v_l$; the other case is symmetric. It turns out that we can easily avoid creating a non-convex angle by simply drawing point $v_l$ on the first available grid point on the extension of $(p^u, u)$. 
If we place $v_l$ in this way, then the length of edge $(p^u, u)$  will be equal to that of $(u,v_l)$, which guarantees that this modification will not increase the required grid size.

Having modified Algorithm \textsf{MQuadTD} as described above, yields a drawing of $T$ that satisfies conditions i), ii) and iii). In order get a drawing that satisfies all five conditions we work as follows. We construct a new tree $T^\prime$ by getting two copies of $T$ and making them subtrees of a new root $r^\prime$. Since $T^\prime$ has $2n+1$ vertices, we can get a drawing of $T^\prime$ that lies in the first quadrant,  satisfies conditions i), ii) and iii) and, moreover, it fits on a $(2n+1) \times (2n+1)$
grid   with its root $r^\prime$ being the only vertex being place on the $x$ or the $y$ axis. By the construction of $T^\prime$ and its symmetry, we observe that its sub-drawing that lies in the second octant corresponds to  a drawing of the initial tree $T$. In this way, we obtain a drawing that satisfies all 5 conditions.
To complete the proof, simply note that all described modification of Algorithm \textsf{MQuadTD} can be easily accommodated in $O(n)$ time.
\qed
\end{proof}

The following Lemma  in not included in the main part of the paper.

\begin{lemma}
\label{lem:convexRegions}
Let $\Gamma_T$ be a drawing for  tree $T$ which satisfies all properties of Thm-\ref{thm:convexSecondOctTree} (and can be   produced by  the modification of algorithm \textsf{MQuadTD} as described in the proof of Thm-\ref{thm:convexSecondOctTree}). 
Then, if we substitute in  $\Gamma_T$ the edges entering the  leaves of $T$ by semi-infinite lines (i.e., rays), then $\Gamma_T$ partitions the first quadrant into unbounded convex regions. 
\end{lemma}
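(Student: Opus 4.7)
The plan is to show that every connected component of the first quadrant minus the (tree $+$ rays) figure is both unbounded and convex. I first set up three geometric facts that I will use throughout. (a)~Because the drawing is non-strictly slope-disjoint, every edge slope lies in the angle range $\langle a_1(r),a_2(r)\rangle$ of the root; because the drawing is contained in the second octant with the root at the origin and (by the doubling construction in the proof of Thm~\ref{thm:convexSecondOctTree}) no other vertex on the coordinate axes, this range is contained in $(\pi/4,\pi/2)$. Hence every tree edge, and every ray obtained by extending an edge entering a leaf, has slope strictly in $(\pi/4,\pi/2)$ and lies in the interior of the second octant (except at the root). (b)~Monotone tree drawings are planar, and the extensions to rays do not introduce crossings because two rays emanating in strictly different slope directions, whose underlying edges already lie in disjoint angular cones of the slope-disjoint drawing, cannot meet. (c)~Since $T$ is a tree, its embedded drawing has a single face; the rays break the boundary walk of this face into finitely many arcs, each arc running from one ray, along a path of tree edges in the outer-face walk, to the next ray.

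The regions of the partition are therefore in bijection with these arcs, plus the two ``end'' regions that are bounded additionally by a piece of the $x$-axis or the $y$-axis. For unboundedness, I would simply note that every region has on its boundary either a ray (which goes to infinity) or, for the two end regions, an entire half-axis of the first quadrant, so every region is unbounded. For convexity, I would check that every interior angle of the boundary of each region is at most $\pi$, examining three kinds of corners. First, at a leaf $\ell$ that appears on the boundary of some region, the incoming edge $(p^\ell,\ell)$ and its extension form a single straight ray through $\ell$, so the interior angle there equals $\pi$ and is (non-strictly) convex. Second, at an internal, non-root vertex $u$ on the boundary of a region, the two boundary edges are, by construction of the outer-face walk, consecutive in the cyclic embedding at $u$; by the near-convex property of $\Gamma_T$, the angle they form is convex.

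The main obstacle is the angle at the root $r$, because the near-convex definition allows exactly one pair of consecutive edges incident to $r$ to fail to be convex. Here I would use geometric fact~(a): all edges incident to $r$ lie in the closed angular wedge $[\pi/4,\pi/2]$, so their total angular span is at most $\pi/4$, and the unique potentially non-convex pair is the ``wrap-around'' pair whose interior angle, swept counter-clockwise from the rightmost edge at $r$ back to the leftmost edge at $r$, equals at least $2\pi-\pi/4=7\pi/4$. This wrap-around angle lies entirely outside the first quadrant (it covers the negative-$x$ and negative-$y$ half-planes), so it is not the interior angle of any region of our partition. The remaining angles at $r$ that bound regions in the first quadrant are: (i)~angles between two consecutive tree edges at $r$, which are convex by the near-convex property, and (ii)~the angle between the rightmost edge at $r$ and the positive $x$-axis (which is less than $\pi/2$), and the angle between the leftmost edge at $r$ and the positive $y$-axis (which is less than $\pi/4$), both convex. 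Combining the three kinds of corner analyses gives that every region's boundary bends only by convex interior angles, and since each region is simply connected (by planarity) and unbounded, each is a convex unbounded region, completing the proof.
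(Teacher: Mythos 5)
Your proof is correct and follows essentially the same route as the paper: near-convexity yields convex angles at every non-root vertex, the quadrant boundary (the paper makes this explicit by adding two extra rays along the positive $x$- and $y$-axes) absorbs the single allowed reflex angle at the root, and the (non-strict) slope-disjointness plus planarity of $\Gamma_T$ ensure the leaf-edge extensions diverge without creating crossings. One small slip: the reflex wrap-around angle at the root does \emph{not} lie entirely outside the first quadrant (it contains the two slivers adjacent to the axes), but your item (ii) treats exactly those slivers and shows they are convex, so the argument and conclusion stand.
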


\begin{proof}
See Figure~\ref{fig:convPlane} for an example drawing. Note that since $\Gamma_T$ is monotone, it is also planar. 
Let $\Gamma^\prime_T $ be the drawing we obtain by substituting each leaf edge in $\Gamma_T $ by a ray. In addition, we add to $\Gamma^\prime_T $ two extra  rays, the first  coinciding with the positive $x$-axis and the second with the positive $y$-axis. 
The lemma states that all unbounded inner faces in $\Gamma^\prime_T$ are unbounded convex regions.

Since the original drawing of $T$ is near-convex and by the fact that the two extra rays lie on the $x$ and   $y$ axes,
all angles (in the first quadrant) between consecutive edges incident to any vertex are convex. Moreover, since the drawing of $T$ is also (non-strictly) slope-disjoint,
each pair of leaf edges is drawn completely inside non-overlapping (but possibly touching) angular sectors. Thus, 
the rays which substitute these edges diverge.  Then, the proof follows by the planarity of the original tree drawing  $\Gamma_T$ of $T$.
\qed 
\end{proof}

\begin{figure}[t]
	\begin{minipage}[t]{0.42\textwidth}
		\centering
		\includegraphics[scale=0.4]{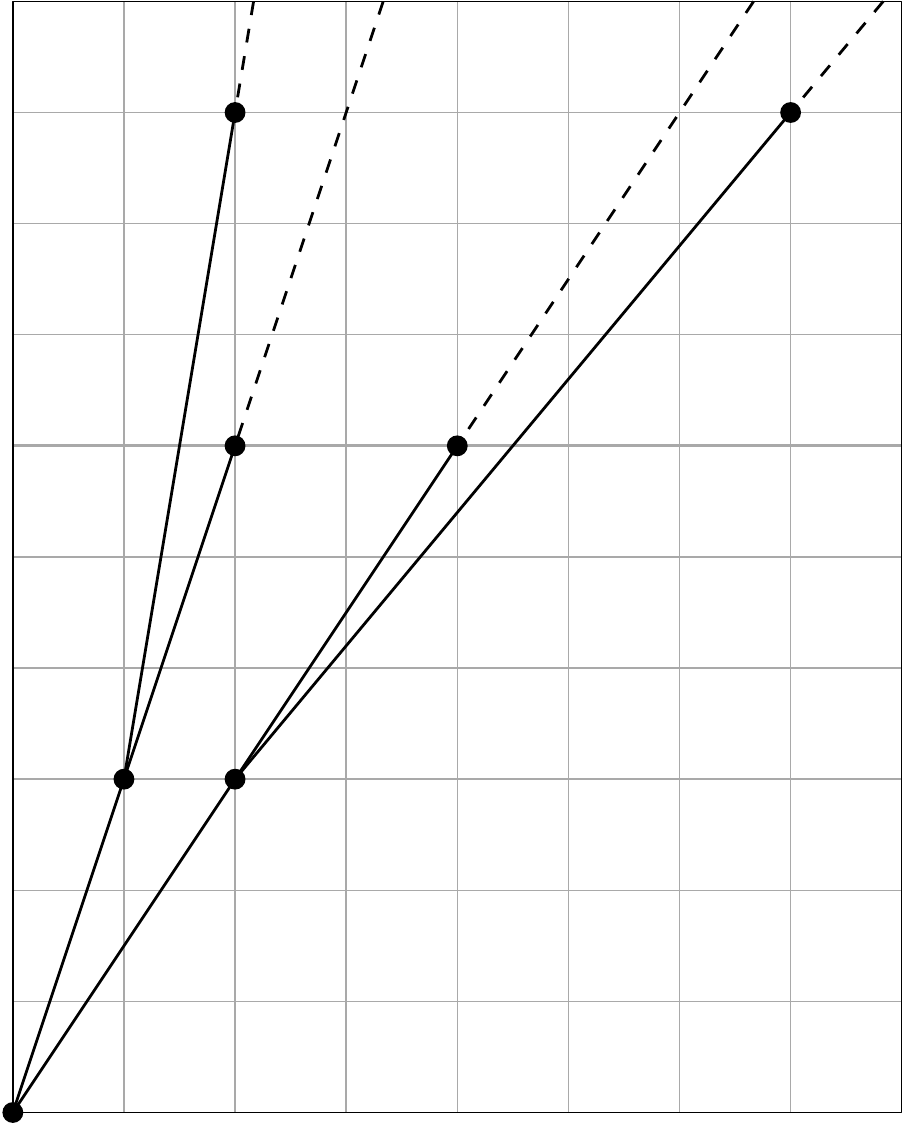}
		\caption{The unbounded convex regions of Lemma~\ref{lem:convexRegions}.}
		\label{fig:convPlane}
	\end{minipage}
	\hfill
	\centering
	\begin{minipage}[t]{0.54\textwidth}
		\centering
		\includegraphics[scale=0.45]{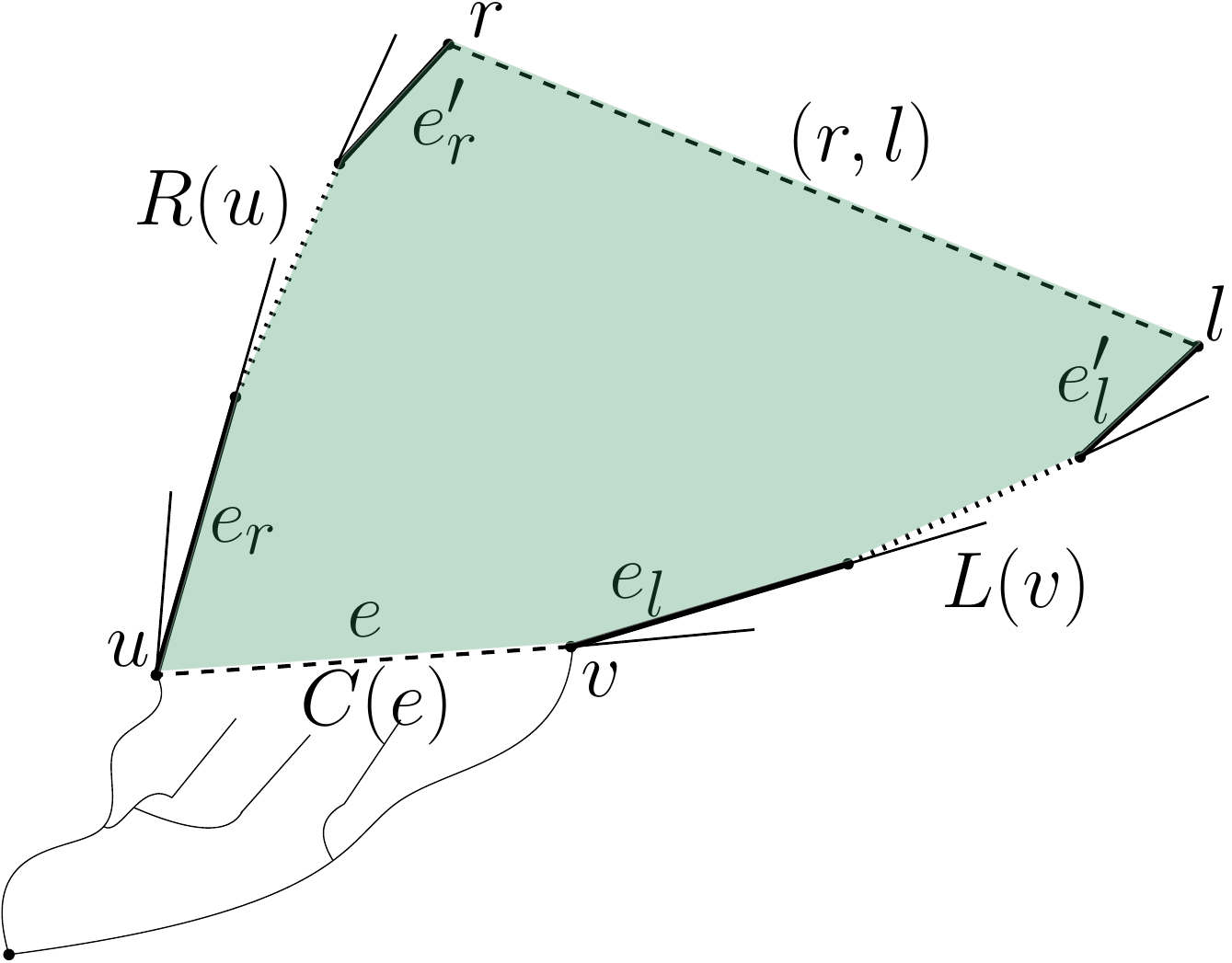}
		\caption{Illustration of Lemma~\ref{lem:convexLeaderRegion}.}
		\label{fig:conv}
	\end{minipage}
\end{figure}

\bigskip
\subsection{Proof of Lemma~\ref{lem:innerVertices}}  

\textbf{Lemma~1.}
\emph{If a planar graph $G$ has a $k$-inner embedding,
then $G$ has a $k'$-inner embedding which contains a good spanning tree, where $k' \leq k$.
Moreover, each non-tree edge covers  at most $k$ of $T$'s leaves.}

\begin{proof}
The algorithm that supports Theorem~\ref{theorem:good_trees} (see~\cite{HR:2015}), when given 
a $k$-inner embedding, it produces a planar embedding $G_\phi$ which contains a good spanning tree $T$
by changing the original embedding. It does so by  only moving \emph{$u$-components} and \emph{$(u,v)$-split components} out of a cycle induced by the constructed good spanning tree\footnote{For the definitions of a \emph{$u$-components} and \emph{$(u,v)$-split components} ans well as details of the good spanning tree construction algorithm see~\cite{HR:2015}.}.
Therefore, the number of vertices in the outer face of $G_\phi$ is equal or greater than the number of vertices in the original embedding of $G$,
which implies that $G_\phi$ is a $k'$-inner planar graph, where $k'\leq k$.
Moreover there are at most $k$ leaf vertices that are covered by any non-tree edge since there are at most $k$ inner vertices.
\qed
\end{proof}

\bigskip
\section{Additional material for Section~3}

Let $G$ be a plane graph and  $T$ being a good spanning tree of $G$.
For any vertex $u$ of $T$,
the \emph{leftmost path} (resp. \emph{rightmost path}) of $u$, denoted by $L(u)$ (resp. $R(u)$), 
starts at $u$, traverses  edges of $T$ by following the last (resp. first) tree edge of each vertex in the counter-clockwise order, and terminates at a leaf vertex. Recall that $C(e)$ denotes the set of leaf vertices of $T$ covered by then non-tree edge $e$. We will call  the non-tree vertices of $G$ that are not leader edges, \emph{ordinary}  edges. Finally, by $CH(P)$ we denote the convex hull of a point-set $P$.

The following technical lemma is not included in the main part of the paper.

\begin{lemma}\label{lem:convexLeaderRegion}
Let $G$ be a plane graph, $T$ be a good spanning tree of $G$, 
$\Gamma_T$ a  monotone drawing of $T$ which in near-convex and non-strictly slope-disjoint, and $e=(u,v)$ 
a leader edge of $G$. 
W.l.o.g., let the inner face of the cycle induced by $e$ and  edges of $T$ lie to the right of edge $e$ when we traverse it  from $u$ to $v$. 
Finally, let $N(e)$ be the set of vertices contained in paths $R(u)$ and $L(v)$.
Then, 
all vertices of  $N(e)$ are located on the the convex hull $CH(N(e))$.
Moreover, any ordinary  edge $e^\prime$ that covers exactly the same leaf vertices as $e$, (i.e.,  $C(e^\prime)=C(e)$) has one of its endpoints on $R(u)$ and the other on $L(v)$.
\end{lemma}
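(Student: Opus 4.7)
The plan is to prove the two assertions of the lemma separately. For the convex-hull claim, the key geometric observation I will establish is that the sequence of edge slopes along $R(u)$ is non-increasing, while the sequence along $L(v)$ is non-decreasing. To derive the former, I will look at each internal vertex $x$ on $R(u)$ with parent $p$ on $R(u)$: in the CCW rotational ordering of edges around $x$, the parent edge $(p,x)$ and the edge from $x$ to its rightmost child are consecutive, and the near-convex property of $\Gamma_T$ forces the angle they span in the bounded face to be at most $\pi$. Since every tree edge has slope strictly in the second octant $(\pi/4,\pi/2)$, a short reduction modulo $2\pi$ translates this convexity condition into the inequality saying that the slope of the edge from $x$ to its rightmost child is at most the slope of $(p,x)$. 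A symmetric argument handles $L(v)$.

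Once slope-monotonicity is established, each of $R(u)$ and $L(v)$ is a convex polygonal chain, so every one of its vertices lies on the convex hull of the chain's own vertex set. To extend this to all of $N(e)$, I will use the hypothesis that the inner face of $e$'s cycle lies to the right of $e$ from $u$ to $v$, combined with the confinement of $\Gamma_T$ to the second octant, to show that $R(u)$ and $L(v)$ bend away from the segment $uv$ on opposite sides; the two chains, together with $e$ (and possibly a segment linking their terminal leaves), then bound a convex region whose boundary contains every vertex of $N(e)$. This yields the first assertion.

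For the second assertion, let $e'=(u',v')$ be an ordinary edge of $G$ with $C(e')=C(e)$. Since $e$ is a leader and $e'\neq e$ shares its non-empty coverage, $e'$ cannot lie strictly inside $e$'s cycle, so the cycle of $e'$ must instead enclose the cycle of $e$; the region between the two cycles is therefore leaf-free and is bounded by $e$, $e'$, and two tree-paths, one descending from $u$ to $u'$ inside $T_u$ and one descending from $v$ to $v'$ inside $T_v$. I then plan to argue that these descent paths must coincide with initial segments of $R(u)$ and $L(v)$: by condition C2 of the good spanning tree and the planarity of $\Gamma_T$, any deviation from the rightmost path at $u$ (respectively the leftmost path at $v$) would either pull a new leaf into the enlarged cycle (contradicting $C(e')=C(e)$) or induce a crossing with a tree edge, both impossible. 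Hence $u'\in R(u)$ and $v'\in L(v)$.

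I expect the second assertion to be the main obstacle, especially the step of formally excluding alternative descent paths. This will require a case analysis on which subtree-side of the rightmost (respectively leftmost) tree edge at each vertex the path could veer into, invoking condition C2 together with the ordered embedding to show that each such veer either sweeps in an extra leaf or produces a crossing. By contrast, the first assertion should reduce, after the slope-monotonicity step, to elementary facts about convex polygonal chains.
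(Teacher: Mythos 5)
Your handling of the first assertion is acceptable and close in spirit to the paper's: the paper likewise combines near-convexity with (non-strict) slope-disjointness, arguing that the rays extending the first edges of $R(u)$ and $L(v)$ downward intersect while the rays extending their last edges upward diverge, so that all of $N(e)$ ends up on $CH(N(e))$. One caveat: your slope-monotonicity step leans on all edge slopes lying strictly in the second octant, which is a property of the drawing of Theorem~\ref{thm:convexSecondOctTree} but is \emph{not} among the hypotheses of this lemma (it is invoked in the proof of Lemma~\ref{lem:nonTreeOther} for drawings assumed only to be near-convex and non-strictly slope-disjoint), so that step should be reworked to use only near-convexity and slope-disjointness.

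The genuine gap is in the second assertion, at the sentence claiming that the region between the two nested cycles is ``bounded by $e$, $e'$, and two tree-paths, one descending from $u$ to $u'$ inside $T_u$ and one descending from $v$ to $v'$ inside $T_v$.'' Nothing you have established forces the endpoints of $e'$ to be descendants of $u$ and of $v$; that is essentially the statement to be proved. From ``$e'$ does not lie inside $e$'s cycle'' you may conclude that the cycle of $e'$ encloses the cycle of $e$, but a priori the enclosing cycle could attach at an ancestor of $u$ (for instance at $u$'s parent $w$), the annular region between the cycles then being bounded by $e$, $e'$ and the \emph{upward} tree path from $u$ to $w$. That region is leaf-free whenever every subtree hanging off this path (including the other subtrees of $u$) lies inside $e$'s cycle, so leaf-counting in the in-between region does not by itself exclude such attachments; and your subsequent C2-based case analysis only examines deviations of a descent path \emph{inside} $T_u$ and $T_v$, so it never addresses this possibility. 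The paper pins down the endpoint locations by a different mechanism: if an endpoint of $e'$ were not on $N(e)=R(u)\cup L(v)$, then the cycle of $e'$ would cover the leaf $r$ or $l$ terminating those paths, contradicting $C(e')=C(e)$; and condition C1 of the good spanning tree (no non-tree edge joins two vertices of a common root-to-leaf path) then forces the two endpoints onto different paths, one on $R(u)$ and one on $L(v)$. You need an argument of that kind --- or an explicit proof that any attachment outside $T_u\cup T_v$ drags additional leaves into $C(e')$ --- before your ``the descent paths coincide with $R(u)$ and $L(v)$'' analysis can even start; as written, the decisive step is assumed rather than proved.
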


\begin{proof}
Refer to Figure~\ref{fig:conv}.
First, we prove that all vertices of  $N(e)$ are located on the convex hull $CH(N(e))$.
Let $r$ and $l$ be the leaves at the end of $R(u)$ and $L(v)$, resp., and  
also let $e_r$ (resp. $e'_r$) be the first (resp. last) edge of $R(u)$
and  $e_l$ (resp. $e'_l$) be the first (resp. last) edge of $L(v)$.

We  observe that if we substitute $e_r$ and $e_l$ with rays toward the decreasing $y$-coordinates,
then the two rays intersect.
Moreover, if we substitute $e'_r$ and $e'_l$ with rays toward the increasing $y$-coordinates,
then the two rays diverge. Both of these two facts are due to that  drawing $\Gamma_T$ is non-strictly slope-disjoint.
Combined with the fact that $\Gamma_T$ is near convex, proves that $CH(N(e))$ contains all vertices of $N(e)$.

Now we prove that every ordinary edge  $e^\prime$ that covers exactly the same leaf vertices as $e$  
(i.e., set  $C(e)$),
has one of its endpoints in $R(u)$ and the other in $R(v)$.
 
For a contradiction,
assume that there is an ordinary edge $e^\prime$ that covers exactly  set $C(e)$ of leaf vertices 
and does not have its endpoints in $N(e)$.
Edge $e^\prime$ cannot lie inside the cycle induced by $e$ and $T$ because it would violate the 
fact that $e$ is a leader edge.
Since $e^\prime$ do not have its endpoints in $N(e)$,
then $e^\prime$ covers at least one of leaf vertices $r$ or $l$, a contradiction. Thus, $e^\prime$ has its endpoints on $N(e)$. By taking also into account that $T$ is a good spanning tree, and thus, no non-tree edge  connects two vertices on the same path from the root to a leaf, we conclude that one of the endpoints of $e$ lies on $R(u$ and the other on $L(v)$.  
\qed
\end{proof}

\bigskip
\subsection{Proof of Lemma~\ref{lem:nonTreeOther}}

\textbf{Lemma~3.}
\emph{Let $G$ be a plane graph that has a good spanning tree $T$ and
let $\Gamma_T$ be a monotone drawing of $T$ that is near-convex and non-strictly slope-disjoint.
Let $\Gamma_L$ be the  drawing produced if we add all leader  edges  of $G$ to $\Gamma_T$
and $\Gamma$  be the drawing produced if we add all non-tree edges  of $G$ to $\Gamma_T$.
Then,  $\Gamma_L$ is planar
if and only if  $\Gamma$ is planar.}

\begin{proof}
First recall that when we add edges to an existing drawing we only add them as straight line segments.
\begin{description}
\item[``$\Leftarrow$'']
It trivially holds. $\Gamma_L$ is a subgraph of $\Gamma$ by definition. Any subgraph of a planar graph is planar.
\item[``$\Rightarrow$'']
We simply have to show that we can insert all ordinary edges into $\Gamma_L$ without violating planarity. Consider and arbitrary ordinary edge $e^\prime$. Then, there exists a leader edge $e=(u,v)$ such that $C( e^\prime)=C(e)$. If not, $e^\prime$ would be a leader edge. 
W.l.o.g., let the inner face of the cycle induced by $e$ and  edges of $T$ lie to the right of edge $e$ when we traverse it  from $u$ to $v$. 
Consider paths $R(u)$ and $L(v)$ and let $r$ and $l$ be their leaf endpoints, resp. Let $N(e)$ be the set of vertices of paths   $R(u)$ and $L(v)$.

Since  drawing $\Gamma_L$ is planar and, by Lemma\ref{lem:convexLeaderRegion},  we know that $CH(N(e))$ contains all points of $N(E)$,  the internal face induced by the cycle formed by edge $e$, $R(u)$, $L(v)$ and the line segment $(r,l)$ is empty and convex. Thus, since one endpoint of $e^\prime$  lies on $R(u)$ and the other on $L(v)$ (Lemma\ref{lem:convexLeaderRegion}), $e^\prime$ can be inserted to the drawing without violating planarity. In the same way, we can insert all ordinary edges. The proof is completed by observing that since $G$ is a plane graph, no two ordinary edges intersect.
\end{description}
\qed
\end{proof}

\bigskip
\subsection{Proof of Lemma~\ref{lem:leaderOrdering}}

\textbf{Lemma~4.}
\emph{Let $G$ be a plane graph that has a good spanning tree $T$ and
let $\Gamma_T$ be a drawing of $T$ that satisfies the properties of Thm~\ref{thm:convexSecondOctTree}. Then, there exists an ordering of the leader edges, such that if they are inserted into $\Gamma_T$ (with the appropriate elongations) in that order they need to be examined exactly once.}

\begin{proof}[Sketch]
We create a directed  \emph{dependency graph}  $D$ that has the leader edges as its vertices. The edges of $D$ correspond to the dependencies between leader edges of $G$. 
Let $e_1$ and $e_2$ be two leader edges of $G$. If $C(e_1) \subset C(e_2)$ we insert a directed edge from $e_1$ to $e_2$ in $D$. If $C(e_2) \subset C(e_1)$ we insert a directed edge from $e_2$ to $e_1$ in $D$. These dependency edges correspond to the dependency depicted in Figure~\ref{fig:dep1_2}(a). In addition, there may be a dependency between $e_1$ and $e_2$ even though $e_1 \cap e_2 = \emptyset$. This is depicted in Figure~\ref{fig:dep1_2}(b) and occurs when an endpoint of $e_1$ is an ancestor of $e_2$ or vice versa. In this case we insert an edge in $D$ from the ``ancestor''  to the ``descendant'' edge. The fact that $G$ is planar and $T$ is a good spanning tree imply that $D$ is actually a directed acyclic graph (DAG). Then, the ordering of the leader edges is obtained by topologically sorting DAG $D$.

We note that the dependency graph is not necessarily connected. Since we study $k$-inner planar graphs, there are at most $k$ leaders and, thus, $D$ has at most $k$ vertices, and consequently $O(k^2)$. So, we need $O(k^2)$ time for the topological ordering of $D$ which, together with the $O(n)$ time required for the recognition of the leader edges, yield an $O(n+k^2)$ time algorithm. However, we can improve it to $O(n+k)$ time by avoiding to insert transitive edges into $D$.
\qed
\end{proof}

\subsection{Proof of Lemma~\ref{lem:edgeElongationSubtreeSize}}

\textbf{Lemma~5.}
\emph{Let $\Gamma_T$ be the  drawing of the good spanning tree $T$  satisfying the properties of Thm~\ref{thm:convexSecondOctTree} and let  $\Gamma_{(u,v)}$ be the  drawing of $G$ immediately after the insertion of  leader edge $e=(u,v)$ by Algorithm~\ref{algo:k_Inner}. Let  $u$ and $v$ be drawn in $\Gamma_T$ at points $(u_x, u_y)$ and $(v_x,v_y)$, resp. Then, in $\Gamma_{(u,v)}$ the drawings of $T_u$ and $T_v$ are contained in $(2n-u_x) \times (2n-u_y)$ and $(2n-v_x) \times (2n-v_y)$ grids, resp.}
\begin{proof}
We only prove the lemma for $T_u$ since the case for $T_v$ is symmetric.

Let $p^u$ be the parent of $u$ in the good spanning tree $T$ used in Algorithm~\ref{algo:k_Inner}. 
The insertion of leader edge $(u,v)$ in the drawing causes the elongation of edge 
$(p^u,u)$ and, as a result, translates the drawing of $T_u$ along the direction of $(p^u,u)$. 

Observe that no edge in $T_u$ has  been elongated  in a previous step.
In order for an edge in $T_u$ to be elongated,
a leader edge $e'=(u',v')$,   such that $u'$ or $v'$ is in $T_u \backslash \left\{u \right\}$, must be selected for insertion.
So,  $u$ must be  an ancestor of $u'$ or $v'$. 
Then, leader $e^\prime$ depends on leader $e$ and has to  follow  $e$ in the sorted  leaders list $L$ (line 6 of Algorithm~\ref{algo:k_Inner}). Thus, no edge in $T_u$ has been elongated prior to the elongation of 
$(p^u,u)$.

The lemma follows from the facts that  i)~the initial drawing $\Gamma_T$ of the good spanning tree $T$ fits on a  $2n \times 2n$ grid  (Thm-\ref{thm:convexSecondOctTree}) and ii)~in $\Gamma_T$, vertex $u$ is placed at point $u_x, u_y)$. Therefore, $T_u$ is contained in a $(2n-u_x)\times (2n-u_y)$ grid.
\qed
\end{proof}

\bigskip
\subsection{Proof of Theorem~\ref{thm:kInnerPlanar}}

\textbf{Theorem~4.}
\emph{Let $G$ be an $n$-vertex $k$-inner planar graph. 
Algorithm~\ref{algo:k_Inner} produces a planar monotone drawing of $G$ on a
$2(k+1)n \times 2(k+1)n$
 grid.}

\begin{proof}
The proof that the  drawing produced by Algorithm~\ref{algo:k_Inner} is actually a monotone plane drawing follows from the facts that i) there is always a good spanning tree with at most $k$ inner vertices (Lemma~\ref{lem:innerVertices}), ii) there is a monotone tree drawing satisfying the properties of Thm~\ref{thm:convexSecondOctTree}, iii) the operation of edge elongation on the vertices of a near-convex monotone non-strictly slope-disjoint tree drawing maintains these properties, iv) the ability to always insert the leader  edges into the drawing without violating planarity (through elongation), and v) the  ability to insert the remaining ordinary edges (Lemma~\ref{lem:nonTreeOther}).

Now we turn our attention to the grid size of the drawing produced by Algorithm~\ref{algo:k_Inner}.
W.l.o.g., we assume that we deal with plane graphs that have only tree and leader edges. We can safely do so since, by Lemma~\ref{lem:nonTreeOther}, all ordinary edges can be inserted in the drawing without violating planarity.

We use induction on the number of leader edges.
For the base case where $k=0$, the grid bound holds by Theorem~\ref{thm:convexSecondOctTree}.
For the induction step,  we assume that the Theorem holds for plane graphs with less than $k$ leader edges, $k\geq 1$ and we show that it also holds for plane graphs with $k$ leader edges.

Firstly note that, since all edges in the drawing have slopes greater than $\frac{\pi}{4}$,
the grid side length is dictated by the maximum $y$-coordinate of a leaf vertex.

Consider a plane graph $G$ that  has a good spanning tree $T$  and $k$ of its non-tree edges are leader edges.  Let $e=(u,v)$ be the last leader edge considered by Algorithm~\ref{algo:k_Inner}. If we remove $e$ from $G$ we get a plane graph with $k-1$ leader edges.  

By the induction hypothesis, 
the produced drawing  $\Gamma$ of $G \setminus \left\lbrace e \right\rbrace$ fits   in a $2kn \times 2kn$ grid. Thus, the path from the root of $T$ to $u$ is contained  in a $2kn \times 2kn$ grid. 

Consider now the case where Algorithm~\ref{algo:k_Inner} inserts leader edge $e$ into the drawing $\Gamma$ it has already produced for graph $G \setminus \left\lbrace e \right\rbrace$. During the   iteration of the while-loop (lines 9-16 of Algorithm~\ref{algo:k_Inner}) for edge $e$,  at most two edges of $T$ are elongated. 
This means that the grid size of the drawing for $G$,  which is dictated as the maximum $y$ coordinate of a leaf vertex,
is the maximum between the $y$ coordinate of a leaf vertex in either $T_u$ or $T_v$, or it  does not change.
By Lemma~\ref{lem:edgeElongationSubtreeSize} and the fact that during the elongation of edge
$(p^u,u)$, vertex $u$ may be translated at most $u^d_y$ grid 
points\footnote{Recall that $\overrightarrow{\mathrm{u}}=(u^d_x, u^d_y)$ denotes  the reference vector of $u$.}
above the highest leaf of the drawing of  $\Gamma$ of $G \setminus \left\lbrace e \right\rbrace$, it follows that, after the insertion edge $e$, the path from the root of $T$ to $u$ fits in a
 $(2kn + u^d_y) \times (2kn + u^d_y)$  grid. So, the path from the root to any leaf of $T_u$ fits in a $(2kn + u^d_y + 2n - u_y) \times (2kn + u^d_y + 2n - u_y)$  grid. However, we have that:
 \begin{eqnarray*}
 2kn + u^d_y + 2n - u_y 	& = 	&  2(k+1)n - (u_y - u^d_y)\\
 							& \leq	&  2(k+1)n
 \end{eqnarray*}
 since $u_y \geq u^d_y$.
Thus, the  path from the root to any leaf of $T_u$ fits in a 
$2(k+1)n \times 2(k+1)n$  grid. Similarly, the same holds for the path from the root of $T$ to any leaf of $T_v$. This completes the proof. 
\qed
\end{proof}

\end{document}